\newcolumntype{L}[1]{>{\raggedright\let\newline\\\arraybackslash\hspace{0pt}}m{#1}}
\newcolumntype{C}[1]{>{\centering\let\newline\\\arraybackslash\hspace{0pt}}m{#1}}
\newcolumntype{R}[1]{>{\raggedleft\let\newline\\\arraybackslash\hspace{0pt}}m{#1}}
\newtheorem{Lemma}{Lemma}
\newtheorem{Definition}{Definition}
\newcommand\I{\mathbb{I}}
\newcommand\E{\mathbb{E}}
\newcommand\R{\mathcal{R}}
\newcommand{\Cov}{\mathrm{Cov}}
\def\OPT{\textsf{OPT}}
\def\src{\textsf{src}}
\def\NRI{\textsf{nSI}}
\def\ERI{\textsf{eSI}}
\def\V{\mathcal{V}}
\def\G{\mathcal{G}}
\def\E{\mathcal{E}}
\def\DN{\mathbb{D}^{(n)}}
\def\DE{\mathbb{D}}
\def\D{\mathbb{D}}
\def\NIS{\textsf{nSIA}}
\def\EIS{\textsf{eSIA}}
\def\OPTe{\textsf{OPT}_k}
\def\OPTn{\textsf{OPT}^{(n)}_k}
\def\saw{\textsf{SAW}}
\def\hsaw{\textsf{HSAW}}
\renewcommand\footnotetextcopyrightpermission[1]{} 
\begin{document}
	
	\title{Blocking Self-avoiding Walks Stops Cyber-epidemics: \\A Scalable GPU-based Approach}
	
	\author{Hung Nguyen, Alberto Cano, and Thang Dinh}
	\affiliation{%
		\institution{Virginia Commonwealth University}
		\city{Richmond} 
		\state{VA} 
		\postcode{23220}
	}
	\email{{hungnt, acano, tndinh}@vcu.edu}
	
	\author{Tam Vu}
	\affiliation{%
		\institution{University of Colorado, Denver}
		\city{Denver} 
		\state{CO} 
		\postcode{80204}
	}
	\email{tam.vu@ucdenver.edu }

	\renewcommand{\shortauthors}{Nguyen et al.}
	\begin{abstract}
	Cyber-epidemics, the widespread of fake news or propaganda through social media, can cause devastating economic and political consequences. A common countermeasure against cyber-epidemics is to disable a small subset of suspected social connections or accounts to effectively contain the epidemics. An example is the recent shutdown of 125,000 ISIS-related Twitter accounts. Despite many proposed methods to identify such subset, none are scalable enough to provide high-quality solutions in nowadays billion-size networks.
        
	To this end, we investigate the Spread Interdiction problems that seek most effective links (or nodes) for removal under the well-known Linear Threshold model. We propose novel CPU-GPU methods that \emph{scale to networks with billions of edges}, yet, possess \emph{rigorous theoretical guarantee} on the solution quality. At the core of our methods is an \emph{$O(1)$-space} out-of-core algorithm to generate a new type of random walks, called \textit{Hitting Self-avoiding Walks} (\hsaw{}s). Such a low memory requirement enables handling of big networks and, more importantly, hiding latency via scheduling of millions of threads on GPUs. Comprehensive experiments on real-world networks show that our algorithms provides much higher quality solutions and are several order of magnitude faster than the state-of-the art. Comparing to the (single-core) CPU counterpart, our GPU implementations achieve significant speedup factors  up to 177x on a single GPU and 338x on a GPU pair.
	\end{abstract}

	%
	%
	\begin{CCSXML}
		<ccs2012>
		<concept_id>10002950.10003624.10003633.10010918</concept_id>
		<concept_desc>Mathematics of computing~Approximation algorithms</concept_desc>
		<concept_significance>500</concept_significance>
		</ccs2012>  
	\end{CCSXML}

	%
	%
	
	%
	%
	
	\vspace{-0.1in}
	\keywords{Spread Interdiction, Approximation Algorithm, GPUs}
		\vspace{-0.1in}
	\maketitle
	
	\vspace{-0.19in}
\section{Introduction}
\label{sec:intro}
Cyber-epidemics have caused significant economical and political consequences, and even more so in the future due to the increasing popularity of social networks.
Such widespread of fake news and propaganda has potential to pose serious threats to global security. For example, through social media, terrorists have recruited thousands of supporters who have carried terror acts including bombings in the US, Europe, killing dozens of thousands of innocents, and created worldwide anxiety \cite{Bombing}. The rumor of explosions at the White House injuring President Obama  caused \$136.5 billion loss in stock market \cite{White} or the recent burst of fake news has significantly influenced the 2016 election \cite{Allcott17}. 

To contain those cyber-epidemics, one common strategy is to disable user accounts or social connects that could potentially be vessels for rumor propagation through the ``word-of-mouth'' effect. For example, Twitter has deleted 125,000 accounts linked to terrorism\cite{Twitter1} since the middle of 2015 and U.S. officials have called for shutting down al-shababs Twitter accounts\cite{Twitter2}. Obviously, removing too many accounts/links will negatively affect legitimate experience, possibly hindering the freedom of speech. Thus it is critical to identify small subsets of social links/user accounts whose removal effectively contains the epidemics. 

Given a social network, which can be abstracted as a graph in which nodes represent users and edges represent their social connections, the above task is equivalent to the problem of identifying nodes and edges in the graph to remove such that it minimizes the (expected) spread of the rumors under a  diffusion model. In a ``blind interdiction'' manner, \cite{Tong12, Dinh15} investigate the problem when no information on the sources of the rumors are available. Given the infected source nodes, Kimura et al. \cite{Kimura08} and \cite{Kuhlman13} proposed heuristics to remove edges to minimize the spread from the sources. Remarkably, Khalil et al. \cite{Khalil14} propose the first $1-1/e-\epsilon$-approximation algorithm for the edges removal problem under the linear threshold model \cite{Kempe03}. However, the number of samples needed to provide the theoretical guarantee is too high for practical purpose. Nevertheless, none of the proposed methods can scale to large networks with billions of edges and nodes.

In this paper, we formulate and investigate two Spread Interdiction problems, namely  \emph{Edge-based Spread Interdiction} (\ERI{}) and \emph{Node-based Spread Interdiction} (\NRI{}). The problems consider a graph, representing a social network and a subset of suspected nodes that might be infected with the rumor. They seek for a size-$k$ set of edges (or nodes) that removal minimize the spread from the suspected nodes under the  well-known linear threshold (LT) model \cite{Kempe03}. Our major contribution is the two hybrid GPU-based algorithms, called  \EIS{}  and \NIS{}, that possess distinguished characteristics:
\begin{itemize}
	\item \textbf{Scalability}: Thanks to the highly efficient self-avoiding random walks generation on GPU(s), our algorithms runs \emph{several order of magnitude faster} than its CPU's counterpart as well as the state-of-the-art method\cite{Khalil14}. The proposed methods take only seconds on networks with billions of edges and can work on even bigger networks via stretching the data across multiple GPUs.
	\item \textbf{Riorous quality guarantee}: Through extensive analysis, we show that our methods return $(1-1/e-\epsilon)$-approximation solutions w.h.p. Importantly, our methods can effectively determine a minimal number of \hsaw{} samples to achieve the theoretical guarantee for given $\epsilon > 0$. In practice, our solutions are consistently 10\%-20\% more effective than the runner up when comparing to the centrality-based, influence-maximization-based methods, and the current state of the art in \cite{Khalil14}.
\end{itemize}


The foundation of our proposed methods is a theoretical connection between Spread Interdiction and a new type of random walks, called \textit{Hitting Self-avoiding Walks} (\hsaw{}s). The connection allows us to find
 the most effective edges (nodes) for removal through finding those who appear most frequently on the \hsaw s. The bottle neck of this approach is, however, the generation of \hsaw s, which requires repeatedly generation of self-avoiding walks until one reach a suspected node. Additionally, the standard approach to generate self-avoiding walks requires $\Omega(n)$ space per thread to store whether each node has been visited. This severely limits the number of threads that can be launched concurrently.
 
To tackle this challenge, we propose a novel \emph{$O(1)$-space} out-of-core algorithm to generate \hsaw. Such a low memory requirement enables handling of big networks on GPU and, more importantly, hiding latency via scheduling of millions of threads.  Comparing to the (single-core) CPU counterpart, our GPU implementations achieve  significant speedup factors  up to 177x on a single GPU and 388x on a GPU pair, making them several order of magnitude faster than the state-of-the art method \cite{Khalil14}.


Our contributions are summarized as follows:
\begin{itemize}
	\item We formulate the problem of stopping the cyber-epidemics by removing nodes and edges as two interdiction problems and establish an important \emph{connection between the Spread Interdiction problems and blocking Hitting Self-avoiding Walk} (\hsaw).
	\item We propose out-of-core $O(1)-space$ \hsaw{} sampling algorithm that allows concurrent execution of millions of threads on GPUs.
		For big graphs that do not fit into a single GPU, we also provide distributed algorithms on multiple GPUS via the techniques of graph partitioning and node replicating.
	 Our sampling algorithm might be of particular interest for those who are into sketching influence dynamics of billion-scale networks.
	\item Two 	$(1-1/e-\epsilon)$-approximation algorithms, namely,  \EIS{} and \NIS{}, for the edge and node versions of the spread interdiction problems. Our approaches bring together rigorous theoretical guarantees and practical efficiency.
	\item We conduct comprehensive experiments on real-world networks with up to 1.5 billion edges. The results suggest the superiority of our methods in terms of solution quality (10\%-20\% improvement) and running time (2-3 orders of magnitude faster).
\end{itemize}

\textbf{Organization}. We present the LT model and formulate two Spread Interdiction problems on edges and nodes in Section~\ref{sec:model}. Section~\ref{sec:saw} introduces Hitting Self-avoiding Walk (\hsaw) and proves the monotonicity and submodularity, followed by \hsaw{} sampling algorithm in Section~\ref{sec:sam} with parallel and distributed implementations on GPUs. The complete approximation algorithms are presented in Section~\ref{sec:ssa}. Lastly, we present our experimental results in Section~\ref{sec:exp}, related work in Section~\ref{sec:relatedwork} and conclusion in Section~\ref{sec:con}.
	
\section{Models and Problem Definitions}
\label{sec:model}
We consider a social network represented by a directed probabilistic graph $\mathcal{G}=(V, E, w)$  that contains $|V|=n$ nodes and $|E|=m$ weighted edges. Each edge $(u, v) \in E$ is associated with an infection weight $w(u, v) \in [0, 1]$ which indicates the likelihood that $u$ will infect $v$ once $u$ gets infected. 

Assume that we observe in the network a set of suspected nodes $V_I$ that might be infected with misinformation or viruses. However, we do not know which ones are actually infected. Instead, the probability that a node $v \in V_I$ is given by a number $p(v) \in [0, 1]$. In a social network like Twitter, this probability can be obtained through analyzing tweets' content to determine the likelihood of misinformation being spread. By the same token, in computer networks, remote scanning methods can be deployed to estimate the probability that a computer gets infected by a virus.

The Spread Interdiction problems aim at selecting a set of nodes or edges whose removal results in maximum influence suspension of the infected nodes. We assume  a subset $C$ of candidate nodes (or edges) that we can remove from the graph. The $C$ can be determined depending on the situation at hand. For example, $C$ can contains (highly suspicious) nodes from $V_I$ or even nodes outside of $V_I$, if we wish to contains the rumor rapidly. Similarly, $C$ can contains edges that are incident to suspected nodes in $V_I$ or $C=E$ if we wish to maximize the effect of the containment.

We consider that the infection spreads according to  the well-known \textit{Linear Threshold (LT)} diffusion model~\cite{Kempe03}.

\subsection{Linear Threshold  Model}
\label{subsec:model}
In the LT model, each user $v$ selects an activation threshold $\theta_v$ \emph{uniformly random} from $[0, 1]$. The edges' weights must satisfy a condition that, for each node, the sum of all in-coming edges' weights is at most 1, i.e., $\sum_{u \in V} w(u, v) \leq 1, \forall v\in V$. The diffusion happens in discrete time steps $t=0, 1, 2, \ldots, n$. At time $t=0$, a set of users $S \subseteq V$,called the \emph{seed set}, are infected and all other nodes are not. We also call the infected nodes \emph{active}, and uninfected nodes \emph{inactive}. An inactive node $v$ at time $t$ becomes active at time $t+1$ if $\sum_{\text{ active neighbors $u$ of } v  } w(u, v) \geq \theta_v$. The infection spreads until no more nodes become active.

Given $\mathcal G=(V, E, w)$ and a seed set $S \subset V$, the \emph{influence spread} (or simply \emph{spread}) of $S$, denoted by $\I_{\mathcal{G}}(S)$, is the expected number of infected nodes at the end of the diffusion process. Here the expectation is taken over the randomness of all thresholds $\theta_v$.

One of the extensively studied problem is the influence maximization problem \cite{Kempe03}. The problem asks for a seed set $S$ of $k$ nodes to maximize $\I_{\mathcal{G}}(S)$. In contrast, this paper considers the case when the seed set (or the distribution over the seed set) is given and aims at identifying a few edges/nodes whose removals effectively reduce the influence spread.

\textbf{LT live-edge model.} In \cite{Kempe03}, the LT model is shown to be equivalent to the \emph{live-edge} model where each node $v\in V$ picks at most one incoming edge with a probability equal to the edge weight. Specifically, a \emph{sample graph} $G=(V, E' \subset E)$ is generated from $\mathcal G$ according to the following rules: 1) for each node $v$, at most one incoming edge is selected; 2) the probability of selecting edge $(u,v)$ is $w(u,v)$ and there is no incoming edge to $v$ with probability $(1-\sum_{u \in N^-(v)}w(u,v))$. 

Then the influence spread $\I_{\mathcal{G}}(S)$ is equal the expected number of nodes reachable from $S$ in sample graph $G$, i.e.,
\vspace{-0.04in}
\begin{align*}
	\I_{\mathcal{G}}(S) &= \sum_{G \sim \mathcal G} \{\text{\#nodes in $G$ reachable from }S\} \Pr[G \sim \G],
\end{align*}
\vspace{-0.1in}

\noindent where $G \sim \mathcal G$ denotes the sample graph $G$ induced from the stochastic graph $\mathcal G$ according to the live-edge model.

For a sample graph $G$ and a node $v\in V$, define
\newcommand{\twopartdef}[3]
{
	\left\{
	\begin{array}{ll}
		#1 & \mbox{if } #2 \\
		#3 & \mbox{otherwise}
	\end{array}
	\right.
}
\vspace{-0.05in}
\begin{align}
	\chi^G(S,v) = \twopartdef {1} {v \text{ is reachable from } S \text{ in } G} {0}
\end{align}
\vspace{-0.15in}

We can rewrite the equation for influence spread as
\begin{align}
	\I_{\mathcal{G}}(S) &= \sum_{v \in V} \sum_{G \sim \mathcal G} \chi^G(S,v) \Pr[G \sim \G]
	\label{eq:compute_inf} 	= \sum_{v \in V} \I_{\G}(S, v),
\end{align}
\vspace{-0.1in}

\noindent where $\I_{\G}(S, v)$ denotes the probability that node $v \in V$ is eventually infected by the seed set $S$.

\textbf{Learning Parameters from Real-world Traces.}
Determining the infection weights in the diffusion models is itself a hard problem and have been studied in various researches \cite{Goyal10,Cha10}. In practice, this infection weight $w(u,v)$ between nodes $u$ and $v$ is usually estimated by the interaction frequency from $u$ to $v$ \cite{Kempe03,Tang15} or learned from additional sources, e.g., action logs \cite{Goyal10}.


\vspace{-0.1in}
\subsection{Spread Interdiction in Networks}
Denote by $\V_I = (V_I, p)$, the set of suspected nodes $\V_I$ and their probabilities of being the sources. $\V_I$ defines a probability distribution over possible seed sets. The probability of a particular seed set $X \subseteq V_I$ is given by
\vspace{-0.05in}
\begin{align}
	\label{eq:prob_src_set}
	\Pr[X \sim \V_I] = \prod_{u \in X} p(u)\prod_{v \in V_I\backslash X} (1-p(v)).
\end{align}
\vspace{-0.1in}

By considering all possible seed sets $X\sim \V_I$, we further define the expected influence spread of $\V_I$ as follows,
\vspace{-0.05in}
\begin{align}
	\label{def:eis}
	\I_{\G}(\V_I) = \sum_{X \sim \V_I} \I_{\G}(X)\Pr[X \sim \V_I].
\end{align}
\vspace{-0.1in}

We aim to remove $k$ nodes/edges from the network to minimize the spread of infection  from the suspected nodes  $\V_I$ (defined in Eq.~\ref{def:eis}) in the residual network. Equivalently, the main goal in our formulations is to find a subset of edges (node) $T$ that maximize the \textit{influence suspension} defined as,
\vspace{-0.05in}
\begin{align}
	\label{eq:sus_nod}
	\D(T, \V_I) = \I_{\G}(\V_I) - \I_{\G'}(\V_I),
\end{align}
where $\G'$ is the residual network obtained from $\G$ by removing edges (nodes) in $S$. When $S$ is a set of nodes, all the edges adjacent to nodes in $S$ are also removed from the $\G$. 

We formulate the two interdiction problems as follows.
\vspace{-0.05in}
\begin{Definition}[Edge-based Spread Interdiction (\ERI{})]
	\label{def:edge_remove}
	Given $\mathcal{G} = (V,E,w)$, $\V_I$, a set of suspected nodes and their probabilities of being infected $\V_I = (V_I,p)$, a candidate set $C \subseteq E$ and a budget $1\leq k \leq |C|$, the \ERI{} problem asks for a $k$-edge set $\hat T_k \subseteq E$ that maximizes the influence suspension $\DE(T_k,\V_I)$.
	\begin{align}
	\label{eq:edge_rev}
	\hat T_k = {\arg\max}_{T_k \subseteq C, |T_k| = k}\DE(T_k,\V_I),
	\end{align}
	\vspace{-0.2in}
	
	\noindent where,
	\vspace{-0.1in}
	\begin{align}
	\label{eq:de}
	\DE(T_k,\V_I) = \I_{\mathcal{G}}(\V_I) - \I_{\mathcal{G}'}(\V_I).
	\end{align}
	\vspace{-0.2in}
\end{Definition}
\vspace{-0.05in}
\vspace{-0.05in}
\begin{Definition} [Node-based Spread Interdiction (\NRI{})]
	\label{def:node_remove}
	Given a stochastic graph $\mathcal{G} = (V,E,w)$, a set of suspected nodes and their probabilities of being infected $\V_I = (V_I,p)$, a candidate set $C \subseteq V$ and a budget $1 \leq k \leq |C|$, the \NRI{} problem asks for a $k$-node set $\hat S_k$ that maximizes the influence suspension $\DN(S_k,\V_I)$.
	\vspace{-0.05in}
	\begin{align}
		\label{eq:node_rev}
			\hat S_k = {\arg\max}_{S_k \subseteq C, |S_k| = k}\DN(S_k,\V_I),
	\end{align}
	\vspace{-0.2in}
	
	\noindent where
	\vspace{-0.1in}
	\begin{align}
		\label{eq:dn}
		\DN(S_k,\V_I) = \I_{\mathcal{G}}(\V_I) - \I_{\mathcal{G}'}(\V_I).
	\end{align}
	\vspace{-0.2in}
	
	\noindent is the influence suspension of $S_k$ as defined in Eq.~\ref{eq:sus_nod}.
\end{Definition}

We also abbreviate $\DN(S_k,\V_I)$ by  $\DN(S_k)$ and $\DE(T_k,\V_I)$ by $\DE(T_k)$ when the context is clear.


\vspace{0.05in}

\textbf{Complexity and Hardness}. The hardness results of \NRI{} and \ERI{} problems are stated in the following theorem.
\begin{theorem}
	\label{theo:hard}
	\NRI{} and \ERI{} are NP-hard and cannot be approximated within $1-1/e-o(1)$ under $P \neq NP$.
\end{theorem}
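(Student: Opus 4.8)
The plan is to prove hardness by a reduction from \textsc{Maximum $k$-Coverage}. Feige's theorem gives that, unless $P \neq NP$ fails, no polynomial-time algorithm approximates \textsc{Maximum $k$-Coverage} within a factor $1-1/e+\epsilon$ for any constant $\epsilon>0$, and the exact version (``does the optimum cover the whole universe?'') is \textsc{Set Cover}, which yields plain NP-hardness. I would establish the statement for \ERI{} (candidate set made of edges) and then transport it to \NRI{} by the routine node-splitting gadget --- replace each candidate node $v$ by an in-copy and an out-copy joined by a single removable weight-$1$ edge, and, in the other direction, subdivide each candidate edge by a removable degree-two node --- which leaves the influence-suspension objective unchanged up to renaming, so that there is a single combinatorial core to build.

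For the construction, start from a \textsc{Maximum $k$-Coverage} instance with universe $U=\{e_1,\dots,e_N\}$ and sets $T_1,\dots,T_M$; using standard padding we may assume the instance is regular, so every element lies in the same number $d$ of sets while Feige's bound still applies. Build a stochastic graph $\mathcal G$ with a single suspected source $r$, $p(r)=1$, $\V_I=\{r\}$. For each element $e_j$ create a witness node $w_j$ and a single directed path $P_j$ from $r$ to $w_j$ that threads, in a globally fixed order, through one ``gate'' node per set containing $e_j$; gates are shared across the paths of the elements they cover. The candidate set $C$ is the collection of the gates' entry edges for \ERI{} (equivalently, the gate nodes for \NRI{}); the property we engineer is that $w_j$ is reachable from $r$ in the live-edge sense only along $P_j$, so deleting any candidate on $P_j$ --- i.e. selecting into the solution any set $T_i\ni e_j$ --- severs $w_j$ from $r$. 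All internal edge weights are assigned symmetrically so that, by regularity, every witness is infected with one common probability $q>0$ and every gate with one common probability $c$; no renormalization is performed on deletion (leftover weight is parked on a dummy pendant), so that removing a candidate only destroys the paths through it and leaves the live-probability of every other path intact.

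The analysis then shows, for any feasible $S\subseteq C$ with $|S|\le k$, that $\D(S,\V_I)=q\,|\mathrm{cov}(S)| + (\text{a term that depends only on }|S|)$, where $\mathrm{cov}(S)$ is the set of elements covered by the sets selected by $S$: witness $w_j$ drops from infected to non-infected exactly when $S$ hits $P_j$, i.e. when $e_j$ is covered, contributing $q$; each selected gate loses its own fixed marginal, and --- after enough symmetric padding --- the remaining cross-effects among selected gates are also pinned down by $|S|$ alone up to an additive $o(q\,|\mathrm{cov}(S^\star)|)$. Hence maximizing influence suspension at budget $k$ is equivalent to \textsc{Maximum $k$-Coverage}: an $\alpha$-approximation for \ERI{}/\NRI{} returns a solution covering at least $(\alpha-o(1))\,|\mathrm{cov}(S^\star)|$ elements, so by Feige the problems cannot be approximated within $1-1/e-o(1)$ unless $P=NP$, and the exact version is NP-hard.

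The main obstacle is the Linear Threshold live-edge model's rule that each node is activated through a \emph{single} live incoming edge. This is precisely why the obvious bipartite ``sets $\to$ elements'' encoding fails here: there the suspension becomes a \emph{modular} function of the chosen sets, so the problem would be solvable exactly and the $1-1/e$ gap would disappear. The reduction must therefore route each witness to $r$ along one path that meets \emph{all} of the candidates covering it --- this is what injects the union / diminishing-returns structure matching coverage --- while keeping $\mathcal G$ of polynomial size, keeping all witness-infection probabilities equal (hence the appeal to regular instances and symmetric weights), controlling the deletion semantics so that no spurious renormalization leaks extra suspension, and bounding the order-dependent cross-effects among selected gates. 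Making all of this bookkeeping consistent, rather than the high-level reduction itself, is where the work concentrates.
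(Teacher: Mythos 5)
Your high-level plan (reduce from Maximum $k$-Coverage via Feige, then move between the edge and node versions by a splitting gadget) is reasonable, and the edge$\leftrightarrow$node gadget matches what the paper actually does for \ERI{}. But the core of your reduction rests on an unproven claim that I do not believe survives the LT live-edge semantics as constructed: you assert that, after ``symmetric padding,'' $\D(S,\V_I)=q\,|\mathrm{cov}(S)|+f(|S|)$ up to an additive $o\bigl(q\,|\mathrm{cov}(S^\star)|\bigr)$. Deleting a gate does not only kill witnesses: it suppresses the infection probability of every downstream relay node and every downstream gate on every path through it, and the size of that suppression depends on the gate's position in your global order and on the pairwise intersections $T_a\cap T_b$ of the selected sets --- quantities that are not functions of $|S|$ alone and are generically of the same order as (or larger than) the witness term, since prefix nodes on a path are more likely to be infected than its endpoint. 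The natural fix, amplifying each witness with many pendant copies so the coverage term dominates, collides with a quantitative obstruction you have not addressed: because a shared gate $g_i$ has one incoming edge per element of $T_i$ and the LT model forces these weights to sum to at most $1$, the common witness probability is $q\approx\prod_{i\ni e_j}|T_i|^{-1}$, i.e.\ exponentially small in the element degree $d$; the Feige instances needed for the full $1-1/e-o(1)$ factor have $d\to\infty$, so the amplification required to swamp the cross-terms need not be polynomial. Your NP-hardness claim inherits the same problem, since deciding exact coverage through this reduction needs the error term to be smaller than a single $q$. In short, the ``bookkeeping'' you defer is not routine; it is where the argument currently breaks.

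For contrast, the paper avoids building any gadget with long paths. It reduces from Influence Maximization under LT (already known to be NP-hard and inapproximable within $1-1/e$): taking $\V_I=V$, $p(v)=1/2$ for all $v$, and $C=V$, it shows by a consistency argument over live-edge samples that $\D(S,\V_I)=\tfrac12\,\I_{\G}(S)$, so \NRI{} inherits the hardness of influence maximization verbatim; \ERI{} then follows by replacing each edge $(u,v)$ with a removable mid-node $e_{uv}$ (the same splitting idea you propose). This sidesteps entirely the path-probability decay and cross-effect control that your construction must fight, because the coverage-type union structure is already encoded in the reachability function $\I_{\G}(S)$ rather than in per-path probabilities. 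If you want to salvage your route, you would need either a gadget in which witness infection probabilities do not degrade with the number of sets covering an element, or an explicit amplification and error analysis showing the residual terms are $o(q\,|\mathrm{cov}(S^\star)|)$ with a polynomial-size graph; as written, neither is established.
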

The proof is in our appendix. In the above definitions, the suspected nodes in $\V_I = (V_I,p)$ can be inferred from the frequency of suspicious behaviors or their closeness to known threats. These probabilities are also affected by the seriousness of the threats.

\textbf{Extension to Cost-aware Model.} One can generalize the \NRI{} and \ERI{} problems to replace the set of candidate $C$ with an assignment of removal costs for edges (nodes).  This can be done by incorporating the cost-aware version of max-coverage problem in  \cite{Khuller99, Nguyen16}. For the shake of clarity, we, however, opt for the uniform cost version in this paper.

\section{Hitting Self-avoiding Walks}
\label{sec:saw}
In this section, we first introduce a new type  of \textit{Self-avoiding Walk} (\saw{}), called \textit{Hitting \saw{}} (\hsaw{}), under the LT model. These \hsaw{}s and how to generate them are keys of our proofs and algorithms. Specifically, we prove that the Spread Interdiction problems are equivalent to identifying the ``most frequently edges/nodes'' among a collection of \hsaw{}s.

\vspace{-0.1in}
\subsection{Definition and Properties}
First, we define \textit{Hitting Self-avoiding Walk} (\hsaw{}) for a sample graph $G$ of $\G$.
\begin{Definition} [Hitting Self-avoiding Walk (\hsaw{})]
	\label{def:self_walk}
	Given a sample graph $G = (V,E)$ of a stochastic graph $\G=(V,E,w)$ under the live-edge model (for LT), a sample set $X \subseteq \V_I$, a walk $h = <v_1,v_2,\dots,v_l>$ is called a hitting self-avoiding walk if $\forall i \in [1,l-1], (v_{i+1},v_i)\in E, \forall i,j \in [1,l], v_i \neq v_j$ and $h \cap X = \{v_l\}$.
\end{Definition}

An \hsaw{} $h$ starts from a node $v_1$, called the source of $h$ and denoted by $\src(h)$, and consecutively walks to an incoming neighboring node without visiting any node more than once. From the definition, the  distribution of \hsaw{}s depends on the distribution of the sample graphs $G$, drawn from $\G$ following the live-edge model, the distributions of the infection sources $\V_I$, and  $\src(h)$.

According to the live-edge model (for LT), each node has at most one incoming edge. This leads to three important properties.

\textit{Self-avoiding}: An \hsaw{} has no duplicated nodes. Otherwise there is a loop in $h$ and at least one of the node on the loop (that cannot be $v_l$) will have at least two incoming edges, contradicting the live-edge model for LT.

\textit{Walk Uniqueness}: Given a sample graph $G \sim \G$ and $X\sim \V_I$, for any node $v \in V\backslash X$, there is at most one \hsaw{} $h$ that starts at node $v$. To see this, we can trace from $v$ until reaching a node in $X$. As there is at most one incoming edge per node, the trace is unique.

\textit{Walk Probability}: Given a stochastic graph $\G = (V,E,w)$ and  $\V_I$, the probability of having a particular \hsaw{} $h = <v_1, v_2, \dots, v_l>$, where $v_l\in \V_I$, is computed as follows,
	\vspace{-0.05in}
	\begin{align}
		\label{eq:compute_path_inf}
		\Pr[h \in \G] = p(v_l) \prod_{u \in V_I \cap h, u\neq v_l}(1-p(u)) \sum_{G\sim \G} \Pr[G \sim \G] \cdot\mathbf 1_{h \in G}\nonumber 
	\end{align}
	\vspace{-0.15in}
	\begin{align}
		\text{ \ }= p(v_l) \prod_{u \in V_I \cap h, u\neq v_l}(1-p(u))\prod_{i = 1}^{l-1} w(v_{i+1},v_i),
	\end{align}
where $h \in \G$ if all the edges in $h$ appear in a random sample $G \sim \G$.

Thus, based on the properties of \hsaw{}, we can define a probability space $\Omega_h$ which has the set of elements being all possible \hsaw{}s and the probability of a \hsaw{} computed from Eq.\ref{eq:compute_path_inf}.

\vspace{-0.1in}
\subsection{Spread Interdiction $\leftrightarrow$ \hsaw{} Blocking}
\label{sec:mon_sub}
From the probability space of \hsaw{} in a stochastic network $\G$ and set $\V_I$ of infected nodes, we prove the following important connection between influence suspenion of a set of edges and a random \hsaw{}. We say $T \subseteq E$ \emph{interdicts} an \hsaw{} $h_j$ if $T \cap h_j \neq \emptyset$. When $T$ interdicts $h_j$, removing $T$ will disrupt $h_j$, leaving $\src(h_j)$ uninfected.
\vspace{-0.05in}
\begin{theorem}
	\label{lem:rr_edge}
	Given a graph $\mathcal{G}=(V, E, w)$ and a set $\V_I$, for any random \hsaw{} $h_j$ and any set $T \in E$ of edges, we have
	\vspace{-0.05in}
	\begin{align}
		\label{eq:lem_edge}
		\DE(T,\V_I) = \I_\G(\V_I) \Pr[T \text{ interdicts } h_j].
	\end{align}
\end{theorem}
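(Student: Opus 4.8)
The plan is to write both $\I_\G(\V_I)$ and $\I_{\G'}(\V_I)$ as sums of \hsaw{} probabilities indexed by a single common set --- all realizable \hsaw{}s --- and then subtract term by term. The crucial identity is
\[
\I_\G(\V_I) \;=\; \sum_{h} \Pr[h \in \G],
\]
the sum ranging over every possible \hsaw{} $h$, with $\Pr[h\in\G]$ as in \eqref{eq:compute_path_inf}. To prove it, start from the node-wise form of \eqref{eq:compute_inf} averaged over $X \sim \V_I$, namely $\I_\G(\V_I) = \sum_{v \in V}\I_\G(\V_I,v)$, where $\I_\G(\V_I,v)$ denotes the probability that $v$ is infected when the seed set is drawn from $\V_I$. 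Fix a live-edge sample graph $G \sim \G$ and a seed set $X \sim \V_I$; I claim $v$ is infected in $(G,X)$ iff exactly one \hsaw{} with source $v$ is realized: if $v \in X$ this is the length-one walk $\langle v\rangle$, and if $v \notin X$ one traces backward from $v$ along the at-most-one incoming edge of each visited node --- by the live-edge property this trace never revisits a node, hence terminates, either at a node of $X$ (the unique \hsaw{} rooted at $v$, by Walk Uniqueness) or at a node with no incoming edge, in which case $v$ is unreachable from $X$. The events ``$h$ is realized'', over \hsaw{}s $h$ rooted at $v$, are pairwise disjoint and each has probability exactly the product in \eqref{eq:compute_path_inf} (by independence of the source-selection coins and of the per-node incoming-edge choices), so summing them gives $\I_\G(\V_I,v) = \sum_{h:\,\src(h)=v}\Pr[h\in\G]$; summing over $v$, and noting each \hsaw{} has a unique source, yields the identity. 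In particular the normalizing constant of the probability space $\Omega_h$ equals $\I_\G(\V_I)$, i.e.\ $\Pr_{\Omega_h}[h_j = h] = \Pr[h\in\G]/\I_\G(\V_I)$.

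Applying the same identity to the residual graph gives $\I_{\G'}(\V_I) = \sum_{h}\Pr[h\in\G']$, where the sum may still range over the \hsaw{}s realizable in $\G$ because deleting edges only destroys \hsaw{}s. Crucially, $\Pr[h\in\G'] = \Pr[h\in\G]$ whenever $T\cap h = \emptyset$ --- removing the edges of $T$ alters neither any source probability nor the live-edge weight of any surviving edge --- while $\Pr[h\in\G']=0$ whenever $h$ uses an edge of $T$. Subtracting,
\[
\DE(T,\V_I) \;=\; \sum_{h}\bigl(\Pr[h\in\G]-\Pr[h\in\G']\bigr) \;=\; \sum_{h:\,T\cap h\neq\emptyset}\Pr[h\in\G].
\]
Finally, dividing and multiplying by $\I_\G(\V_I)$ and using the description of $\Omega_h$ above,
\[
\sum_{h:\,T\cap h\neq\emptyset}\Pr[h\in\G] \;=\; \I_\G(\V_I)\!\!\sum_{h:\,T\cap h\neq\emptyset}\!\!\Pr_{\Omega_h}[h_j = h] \;=\; \I_\G(\V_I)\,\Pr[T\text{ interdicts }h_j],
\]
which is exactly \eqref{eq:lem_edge}.

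The main obstacle is the first identity: in particular, carefully establishing the per-sample correspondence between ``$v$ is infected'' and ``a unique \hsaw{} rooted at $v$ is realized'', and checking that the realization probability factors exactly as in \eqref{eq:compute_path_inf}. Everything afterward --- the invariance of surviving-edge weights under edge deletion, and the normalization of $\Omega_h$ --- is routine bookkeeping that rests directly on the live-edge formulation and the three \hsaw{} properties already established.
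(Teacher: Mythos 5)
Your proposal is correct in substance and takes essentially the same route as the paper's proof: both reduce the influence suspension to the total probability mass of \hsaw{}s intersecting $T$, and both rely on the identity $\I_\G(\V_I)=\sum_{h}\Pr[h\in\G]$ to normalize into $\Pr[T \text{ interdicts } h_j]$. The only organizational difference is that the paper conditions on a deterministic seed set $X\sim\V_I$ and a target node $v$, groups live-edge sample graphs into classes $\Omega_h$ via the walk-uniqueness property, and sums over $X$ at the end, whereas you fold the source randomness into the walk probability from the start and argue realization-by-realization; this is the same decomposition, just presented unconditionally. One sub-claim in your justification of the key identity is false as stated: the backward trace from $v$ in a live-edge sample \emph{can} revisit a node, because the LT live-edge graph may contain directed cycles (every node on such a cycle has exactly one incoming edge, so nothing is contradicted) --- this is exactly the situation of Fig.~\ref{fig:path}(b) and the reason Algorithm~\ref{alg:ris} needs cycle detection. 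The slip is harmless here: since every path ending at $v$ must be a suffix of the backward trace, if the trace enters a cycle before meeting $X$ then $v$ is unreachable from $X$ and no \hsaw{} rooted at $v$ is realized, so your equivalence ``$v$ is infected iff exactly one \hsaw{} rooted at $v$ is realized'' survives once the dichotomy (hit $X$ or dead-end) is replaced by the trichotomy (hit $X$, dead-end, or cycle). With that correction, the disjointness, the factorization matching Eq.~\eqref{eq:compute_path_inf}, the invariance of surviving-walk probabilities under deletion of $T$, and the final normalization all go through, and the argument is complete.
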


The proof is presented in the extended version in \cite{Extended}. Theorem~\ref{lem:rr_edge} states that the influence suspension of a set $T \subseteq E$ is proportional to the probability that $T$ intersects with a random \hsaw{}.  Thus, to maximize the influence suspension, we find a set of edges that hits the most \hsaw{}s. This motivates our sampling approach:
\begin{itemize}
	\item[(1)] Sample $\theta$ random \hsaw{}s to build an estimator the influence suspensions of many edge sets,
	\item[(2)] Apply \textsf{Greedy} algorithm over the set of \hsaw{} samples to find a solution $\hat T_k$ that blocks the most \hsaw{} samples.
\end{itemize}
The challenges in this approach are how to efficiently generate random \hsaw{}s and what the value of $\theta$ is to provide guarantee.

As a corollary of Theorem~\ref{lem:rr_edge}, we obtain the monotonicity and submodularity of the influence suspension function $\DE(T_k, \V_I)$.
\vspace{-0.1in}
\begin{corollary}
	\label{lem:mon_sub}
	The influence suspension function $\DE(S)$ where $T$ is the set of edges, under the LT model is monotone,
	\begin{align}
		\forall T \subseteq T', \DE(T) \leq \DE(T'),
	\end{align}
	and submodular, i.e. for any $(u,v) \notin T'$,
	\begin{align}
		\DE(T \cup \{v\}) - \DE(T) \geq \DE(T' \cup \{(u,v)\}) - \DE(T').
	\end{align}
\end{corollary}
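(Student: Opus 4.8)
The plan is to derive everything from Theorem~\ref{lem:rr_edge}, which already writes $\DE(T,\V_I) = \I_\G(\V_I)\,\Pr[T \text{ interdicts } h_j]$ for a random \hsaw{} $h_j$. Since $\I_\G(\V_I)$ is a fixed nonnegative number independent of $T$, and multiplying a set function by a nonnegative constant preserves both monotonicity and submodularity, it suffices to show that the set function $f(T) := \Pr[T \text{ interdicts } h_j]$ over the ground set $E$ is monotone nondecreasing and submodular.

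First I would record the combinatorial meaning of the event: as defined just before Theorem~\ref{lem:rr_edge}, ``$T$ interdicts $h_j$'' is precisely the event $\{\,T \cap h_j \neq \emptyset\,\}$, where $h_j$ is identified with the set of edges lying on the walk. For monotonicity, if $T \subseteq T'$ then $\{T\cap h_j \neq \emptyset\} \subseteq \{T'\cap h_j \neq \emptyset\}$ as events in $\Omega_h$, so $f(T) \le f(T')$.

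For submodularity I would argue \emph{outcome by outcome}. Fix a realization $h \in \Omega_h$ and set $g_h(T) := \mathbf 1[\,T \cap h \neq \emptyset\,]$; this is the standard coverage indicator in which the single ``element'' $h$ is covered by $T$ iff $T$ contains at least one edge of the walk $h$. Checking the diminishing-returns inequality for $g_h$ is immediate: for $T \subseteq T'$ and an edge $e=(u,v) \notin T'$, if $e \notin h$ then both marginal gains vanish; if $e \in h$ then $g_h(T\cup\{e\}) = g_h(T'\cup\{e\}) = 1$, so the two marginal gains equal $1-g_h(T)$ and $1-g_h(T')$ respectively, and $g_h(T) \le g_h(T')$ by the monotonicity of $g_h$. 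Hence each $g_h$ is monotone and submodular. Since $f(T) = \sum_{h \in \Omega_h} \Pr[h]\, g_h(T)$ is a nonnegative linear combination of the $g_h$, it inherits both properties, and therefore so does $\DE(T,\V_I) = \I_\G(\V_I)\, f(T)$. The identical argument, with $h_j$ read as its set of vertices rather than edges, yields the corresponding statement for the node function $\DN(S,\V_I)$.

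I do not anticipate a genuine obstacle: once Theorem~\ref{lem:rr_edge} is in hand, the corollary is essentially the textbook fact that a probability mixture of coverage functions is monotone submodular. The only points needing care are matching the event ``$T$ interdicts $h_j$'' to the set-intersection condition $T\cap h_j\neq\emptyset$ so each summand is a bona fide coverage indicator, and observing that no normalization of the weights $\Pr[h]$ is required, since nonnegative (conic) combinations already preserve monotonicity and submodularity.
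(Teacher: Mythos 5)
Your proposal is correct and follows essentially the same route as the paper: the paper's proof also reduces the corollary to the weighted-coverage representation $\DE(T,\V_I)=\sum_{h\cap T\neq\emptyset}\Pr[h]$ (obtained in the proof of Theorem~\ref{lem:rr_edge}), i.e., exactly your nonnegative mixture of intersection indicators, and then invokes the monotonicity and submodularity of weighted coverage. You merely spell out the per-outcome verification and the conic-combination step that the paper leaves implicit.
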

The proof is presented in the extended version \cite{Extended}. The monotonicity and submodularity indicates that the above greedy approach will return $(1-1/e-\epsilon)$ approximation solutions,  where $\epsilon>0$ depends on the number of generated \hsaw. To provide a good guarantee, a large number of \hsaw{} are needed, making generating \hsaw{} the bottleneck of this approach.

\section{Scalable HSAW Sampling Algorithm}
We propose our sampling algorithm to generate \hsaw{} samples on massive parallel GPU platform. We begin with the simple CPU-based version of the algorithm.
\label{sec:sam}
\begin{algorithm} \small
	\caption{\textsf{\hsaw{} Sampling Algorithm}}
	\label{alg:ris}
	\KwIn{Graph $\G$, suspect set $\V_I$ and $p(v), \forall v \in \V_I$}
	\KwOut{A random \hsaw{} sample $h_j$}
	\While{$\emph{\textsf{True}}$}{
		Pick a node $v$ uniformly at random;\\
		Initialize $h_j = \emptyset$;\\
		\While{$\emph{\textsf{True}}$}{
			$h_j = h_j \cup \{(u,v)\}$ ($h_j = h_j \cup \{u\}$ for node version);\\
			Use \emph{live-edge} model to select an edge $(u,v) \in E$;\\
			\If{no edge selected}{\textbf{break};}
			\If{edge $(u,v)$ is selected}{
				\If{$u \in V_I$ \emph{\textbf{and}} $\textsf{rand}() \leq p(u)$}{ \textbf{return} $h_j$;}
				\If{$u \in h_j$}{\textbf{break};}
				Set $v = u$;\\
			}
		}
	}
\end{algorithm}

\vspace{-0.05in}
\subsection{CPU-based Algorithm to Generate \hsaw s}
\label{subsec:saw_sam}

Algorithm~\ref{alg:ris} describes our \hsaw{} sampling procedure which is based on the \emph{live-edge} model in Section~\ref{sec:model}. 

The algorithm follows a \textit{rejection sampling scheme} which repeatedly generate random \saw{} (Lines~2-14) until getting a \hsaw{} (Line~1). The \saw{} sampling picks a random node $v$ and  follows the live-edge model to select an incoming edge to $(u, v)$ (Lines~5-14). Then it replaces $v$ with $u$ and repeat the process until either: 1) no live-edge is selected (Lines~7-8) indicating that $h_j$ does not reach to an infected node; 2) $h_j$ hits to a node in $\V_I$ and that node is actually an infected node (Lines~10-11) or 3) edge $(u,v)$ is selected but $u$ closes a cycle in $h_j$. Only in the second case, the algorithm terminates and return the found \hsaw.

\begin{figure}[!ht]
	\vspace{0.05in}
	\centering
	\includegraphics[width=0.65\linewidth]{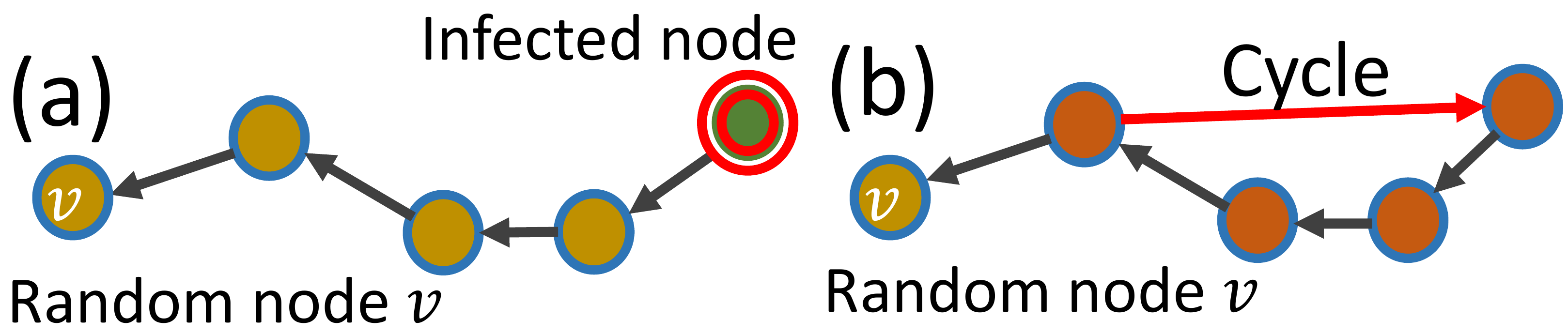}
	\vspace{-0.15in}
	\caption{\small (a) no cycle, \hsaw{} found  (b) a single cycle, no \hsaw.}
	\label{fig:path}
\end{figure}
 The algorithm is illustrated in Fig.~\ref{fig:path}. In (\textbf{a}), the simple path travels through several nodes and reach an infected node. In (\textbf{b}), the algorithm detects a cycle. 


\vspace{-0.15in}
\subsection{Parallel \hsaw{} Generation on GPU\lowercase{(s)}}
\label{sec:gpu}

GPUs with the massive parallel computing power offer an attractive solution for generating \hsaw,  the major bottleneck. 
As shown in the previous subsection, generating a \hsaw{} requires repeatedly generating \saw s. 
Since the \saw{} samples are independent, if we can run millions of \saw{} generation threads on GPUs,  we can maximize the utility of GPUs' cores and minimize the stalls due to pipelines hazard or memory accesses, i.e., minimize \textit{latency hiding}. Moreover, only the hitting \saw{} need to be transported back to CPU, thus the GPU-CPU communication is minimal.

\textbf{Challenges.}
Due to the special design of GPU with a massive number of parallel threads, in the ideal case, we can speed up our algorithms vastly if memory accesses are \textit{coalesced} and there is no \textit{warp divergence}. However, designing such algorithms to fully utilize GPUs requires attention to the GPU architecture.


Moreover, executing millions of parallel threads means each thread has little memory to use. Unfortunately, the CPU-based Algorithm to generate \hsaw (Alg.~\ref{alg:ris}) can use up to $\Omega(n)$ space to track which nodes have been visited. For large networks, there is not enough memory to launch a large number of threads.

%

We tackle the above challenges and design a new lightweight \hsaw{} generation algorithm. Our algorithm, presented in Alg.~\ref{alg:par_thread}, requires only $O(1)$ space per thread. Thus, millions of threads can be invoked concurrently to maximize the efficiency. The algorithm \textsf{ThreadSample} in Alg.~\ref{alg:par_thread} consists of three efficient techniques: $O(1)$-space Path-encoding, $O(1)$-space Infinite Cycle Detection and Sliding Window Early Termination to generate \hsaw.


\vspace{-0.1in}
\subsubsection{Memory-efficient Path-encoding}
The first technique $O(1)$-space Path-encoding aims at generating \saw{} samples on GPU cores using only constant memory space. We take advantage of a typical feature of modern pseudo-random number generators 
that a random number is generated by a function with the input (seed) being the random number generated in the previous round,
\vspace{-0.05in}
\begin{align}
	\label{eq:rand}
	r_i = f(r_{i-1}) \qquad (i \geq 1)
\end{align}
\vspace{-0.17in}

\noindent where $r_0$ is the initial seed that can be set by users. Those  generators are based on linear recurrences and proven in \cite{Vigna} to be extremely fast and passing strong statistical tests. 

Thus, if we know the value of the random seed at the beginning of the \saw{} generator and the number of traversal steps, we can reconstruct the whole walks. As a result, the \saw{} sampling algorithm only needs to store the set of initial random seeds and the walk lengths. 
The Alg.~\ref{alg:par_thread}  is similar Alg.~\ref{alg:ris} except it does not return a \saw{} but only two numbers $\textsl{Seed}_h$ and $\textsl{Len}_h$ that encode the walk.

To detect cycle (line 17), \textsf{ThreadSample} use the following two heuristics to detect most of the cycles. As the two heuristics can produce false negative (but not false positive), there is small chance that \textsf{ThreadSample} will return some walks with cycles. However, the final checking of cycle in Alg.~\ref{alg:par_saw} will make sure only valid \hsaw{} will be returned.
\vspace{-0.1in}
\subsubsection{Infinite Cycle Detection}
To detect cycle in \saw{} sampling (line~17 in Alg.~\ref{alg:par_thread}), we adopt two constant space Cycle-detection algorithms: the Floyd's \cite{Knuth98} and Brent's algorithms \cite{Knuth98}.

The Floyd's algorithm only requires space for two pointers to track a generating path. These pointers move at different speeds, i.e., one is twice as fast as the other. Floyd's guarantees to detect the cycle in the first traversing round of the slower pointer on the cycle and in the second round of the faster one.
The Floyd's algorithm maintains two sampling paths pointed by two pointers and thus, needs two identical streams of random live-edge selections. 

Differently, the Brent's algorithm cuts half of the computation of Floyd's algorithm by requiring a single stream of live-edge selections. The algorithm compares the node at position $2^{i-1}, i \ge 1$ with each subsequent sequence value up to the next power of two and stops when it finds a match. In our experiments, this algorithm is up to 40\% faster than Floyd's. Overall, both of the algorithms only need memory for two pointers and have the complexity of $O(|h_j|)$. 
The Brent's algorithm combined with cycle detection results in a speedup factor of 51x in average compared to a single CPU core.
%

\vspace{-0.1in}
\subsubsection{Short Cycle Detection with Cuckoo Filter}
Many cycles if exist often have small size. Thus, we use Cuckoo filter \cite{Fan14} of a small fixed size $k$ to index and detect the cycle among the last $k$ visited nodes. 
Our experimental results (with $k=2$) show that this short cycle detection improves further other acceleration techniques to a speedup factor of 139x.

\let\oldnl\nl
\newcommand{\nonl}{\renewcommand{\nl}{\let\nl\oldnl}}
\begin{algorithm} \small
	\caption{\textsf{ThreadSample} - Sampling on a GPU thread}
	\label{alg:par_thread}
	\KwIn{$l$ and $\textsl{ThreadID}$}
	Global pool $H$ of \hsaw{} samples; \\
	Initialize a generator $\textsc{PRG}.\textsl{Seed} \leftarrow \textsc{PRG2}(\textsl{ThreadID})$;\\
	\For{i = 1 \emph{\textbf{ to }} 8}{\textsc{PRG}.\textsl{next()}\tcp*{Burn-in period}}
	\For{i = 1 \emph{\textbf{to}} l}{
		$\textsl{Seed}_h = \textsc{PRG}.\textsl{next}(); \textsl{Len}_h = 0$;\\
		Use \textsc{PRG} to pick a node $v$ uniformly at random;\\
		\While{$\emph{\textsf{True}}$}{
			Use \textsc{PRG} to select an edge $(u,v) \in E$ following the \emph{live-edge} LT model;\\
			\If{no edge selected \emph{\textbf{ or }} $\textsl{Len}_h \geq n$}{\textbf{break};}
			\If{edge $(u,v)$ is selected}{
				\If{$u \in V_I$}{
					Use \textsc{PRG} with probability $p(u)$: \\
						\Indp $H = H \text{ }\cup <\textsl{Seed}_h, \textsl{Len}_h + 1>$;\\
						\textbf{break};
				}
				\If{$\text{cycle detected at }u$}{\textbf{break};}
				Set $v = u; \textsl{Len}_h = \textsl{Len}_h + 1$;\\
			}
		}
	}
\end{algorithm}
\begin{algorithm}\small
	\caption{Parallel \hsaw{} Sampling Algorithm on GPU}
	\label{alg:par_saw}
	\KwIn{Graph $\G$, $\V_I$, $p(v), \forall v \in \V_I$}
	\KwOut{$\R$ - A stream of \hsaw{} samples}
	$i = 0;$\\
	\While{\emph{\textsf{True}}}{
		Initialize global $H = \emptyset, l = 10; \textsf{thread}_{\max}$ depends on GPU model;\\
		Call \textsf{ThreadSample}(\textsl{ThreadID}, $l$) $\forall \textsl{ThreadID} = 1..\textsf{thread}_{\max}$;\\
		\ForEach{$<\textsl{Seed}_h, \textsl{Len}_h> \in H$}{
			Reconstruct $h$ from $\textsl{Seed}_h, \textsl{Len}_h$;\\
			\If{ $h$\text{ has no cycle}}{
				$i \leftarrow i + 1$;\\
				Add $R_i = \{\text{edges in } h\}$ to stream $\R$;
			}
		}
	}
\end{algorithm}
%

\vspace{-0.08in}
\subsubsection{Combined Algorithm} The combined algorithm of generating \hsaw{} on GPU is presented in Alg.~\ref{alg:par_saw} which generates a stream of \hsaw{} samples $h_1, h_2, \dots$. The main component is a loop of multiple iterations. Each iteration calls $\textsf{thread}_{\max}$, i.e. maximum number of threads in the GPU used, threads to execute Alg.~\ref{alg:par_thread} that runs on a GPU core and generates at most $l$ \hsaw{} samples. Those samples are encoded by only two numbers $\textsl{Seed}_h$ and $\textsl{Len}_h$ which denotes the starting seed of the random number generator and the length of that \hsaw{}. Based on these two numbers, we can reconstruct the whole \hsaw{} and recheck the occurrence of cycle. If no cycles detected, a new \hsaw{} $R_i = \{\text{edges in } h\}$ is added to the stream. The small parameter $l$ prevents thread divergence.


Recall that Alg.~\ref{alg:par_thread} is similar to that of Alg.~\ref{alg:ris} except:
\begin{itemize}
	\vspace{-0.051in}
	\item[1)] It only stores three numbers: node $v$, $\textsl{Seed}_h$ and $\textsl{Len}_h$.
	\item[2)] It uses two random number generator $\textsc{PRG}$ and $\textsc{PRG2}$ which are in the same class of linear recurrence (Eq.~\ref{eq:rand}). $\textsf{PRG}$ goes through the burn-in period to gurantee the randomness (Lines~3-4).
	\item[3)] Cycle detection in Line~17 can be Floyd's, Brent's or one with Cuckoo Filter (this requires rechecking in Alg.~\ref{alg:par_saw}).
\end{itemize}
Thus, the algorithms requires only a constant space and has the same time complexity as \hsaw{} sampling in Alg.~\ref{alg:ris}.

\vspace{-0.1in}
\subsection{Distributed Algorithm on Multiple GPUs}
In case the graph cannot fit into the memory of a single GPU,  we will need to distribute the graph data across multiple GPUs. We refer to this approach as \emph{Distributed algorithm on Multiple GPUs}.

We use the folklore approach of partitioning the graph into smaller (potentially overlapping) partitions. Ideally, we aim at partitioning the graph that minimizes the inter-GPU communication. This is equivalent to minimizing the chance of a \hsaw{} crossing different partitions. To do this, we first apply the standardized METIS \cite{Lasalle15} graph partitioning techniques into $p$ partitions where $p$ is the number of GPUs. Each GPU will then receive a partition and generate samples from that subgraph. The number of samples generated by each GPU is proportional to the number of nodes in the received partition. 
We further reduce the crossing \hsaw{} by extending each partition to include nodes that are few hop(s) away. The number of hops away is called \textit{extension parameter}, denoted by $h$. We use $h=1$ and $h=2$ in our experiments.

\section{Approximation Algorithms}
\label{sec:ssa}
This section focuses on the question of detecting the minimal number of \hsaw{} to guarantee $(1-1/e-\epsilon)$ approximation and the complete present of \EIS. We adopt the recent Stop-and-Stare framework \cite{Nguyen163} proven to be efficient, i.e. meeting theoretical lower bounds on the number of samples.
\begin{algorithm} \small
	\caption{\textsf{Greedy} algorithm for maximum coverage}
	\label{alg:max-cover}
	\KwIn{A set $\R_t$ of \hsaw{} samples, $C \subseteq E$ and $k$.}
	\KwOut{An $(1 - 1/e)$-optimal solution $\hat T_k$ on samples.}
	$\hat T_k = \emptyset$;\\
	\For{$i = 1 \emph{\textbf{ to }} k$}{
		$\hat e \leftarrow \arg \max_{e \in C \backslash \hat T_k}(\Cov_{\R_t}(\hat T_k\cup \{e\}) - \Cov_{\R_t}(\hat T_k))$;\\
		Add $\hat c$ to $\hat T_k$;\\
	}
	\textbf{return} $\hat T_k$;
\end{algorithm}
\begin{algorithm} \small
	\caption{\textsf{Check} algorithm for confidence level}
	\label{alg:check}
	\KwIn{$\hat T_k, \R_t, \R'_t, \epsilon, \delta$ and $t$.}
	\KwOut{\textsf{True} if the solution $\hat T_k$ meets the requirement.}
	Compute $\Lambda_1$ by Eq.~\ref{eq:lambda_1};\\
	\If{$\Cov_{\R'_t}(\hat T_k) \geq \Lambda_1$}{
		$\epsilon_1 = \Cov_{\R_t}(\hat T_k)/\Cov_{\R'_t}(\hat T_k) - 1$; \\
		$\epsilon_2 = \epsilon \sqrt{\frac{|\R'_t|(1+\epsilon)}{2^{t-1} \Cov_{\R'_t}(\hat T_k)}}$; $\epsilon_3 = \epsilon \sqrt{\frac{|\R'_t|(1+\epsilon) (1-1/e-\epsilon)}{(1+\epsilon/3)2^{t-1} \Cov_{\R'_t}(\hat T_k)}}$; \\
		$\epsilon_t = (\epsilon_1 + \epsilon_2 + \epsilon_1 \epsilon_2)(1-1/e-\epsilon) + (1-1/e)\epsilon_3$; \\
		\If{$\epsilon_t \leq \epsilon$}{\textbf{return} \textsf{True};}
	}
	\textbf{return} \textsf{False};
\end{algorithm}

\begin{algorithm} \small
	\caption{Edge Spread Interdiction Algorithm (\EIS{})}
	\label{alg:eis}
	\KwIn{Graph $\G$, $\V_I$, $p(v), \forall v \in \V_I$, $k$, $C \subseteq E$ and $0 \leq \epsilon, \delta \leq 1$.}
	\KwOut{$\hat T_k$ - An $(1 - 1/e - \epsilon)$-near-optimal solution.}
	Compute $\Lambda$ (Eq.~\ref{eq:lambda}), $N_{\max}$ (Eq.~\ref{eq:guard}); $t = 0$; \\
	A stream of \hsaw{} $h_1, h_2, \dots$ is generated by Alg.~\ref{alg:par_saw} on GPU;\\
	\Repeat{$|\R_t|\geq N_{\max}$}{
		$t = t+1$; $\R_t = \{ R_1, \dots, R_{\Lambda 2^{t-1}} \}; \R'_t = \{ R_{\Lambda 2^{t-1} + 1}, \dots, R_{\Lambda 2^{t}} \}$;\\
		$\hat T_k \leftarrow \textsf{Greedy}(\R_t, C, k)$;\\
		\If{$\emph{\textsf{Check}}(\hat T_k, \R_t, \R'_t, \epsilon,\delta) = \emph{\textsf{True}}$}{\textbf{return} $\hat T_k$;}
	}
	\textbf{return} $\hat T_k$;\\
\end{algorithm}

\vspace{-0.1in}
\subsection{Edge-based Spread Interdiction Algorithm}
Similar to \cite{Tang14,Tang15,Nguyen163}, we first derive a threshold 
\vspace{-0.05in}
\begin{align}
	\label{eq:theta}
	\theta = (2-\frac{1}{e})^2(2+\frac{2}{3}\epsilon) \hat \I_\G(\V_I) \cdot \frac{\ln (6/\delta)+\ln {m \choose k}}{\OPTe{} \epsilon^2}.
\end{align}
Using $\theta$ \hsaw{} samples, the greedy algorithm (Alg.~\ref{alg:max-cover}) guarantees to returns a $(1-1/e-\epsilon)$ approximate solution with a probability at least $1-\delta/3$ (Theorem~\ref{theo:app} in the extended version \cite{Extended}.) 

Unfortunately, we cannot compute this threshold directly as it involves two unknowns $\hat \I_\G(\V_I)$ and $\OPTe{}$. The Stop-and-Stare framework in \cite{Nguyen163} untangles this problem by utilizing two independent sets of samples: one for finding the candidate solution using \textsf{Greedy} algorithm and the second for out-of-sample verification of the candidate solution's quality. This strategy guarantees to find a $1-1/e-\epsilon$ approximation solution within at most a (constant time) of any theoretical lower bounds such as the above $\theta$ (w.h.p.)
%

\textbf{\EIS{} Algorithm.} The complete algorithm \EIS{} is presented in Alg.~\ref{alg:eis}. It has two sub-procedures: \textsf{Greedy}, Alg.~\ref{alg:max-cover},  and \textsf{Check}, Alg.~\ref{alg:check}.

\textsf{Greedy}: Alg.~\ref{alg:max-cover}  selects a candidate solution $\hat T_k$ from a set of \hsaw{} samples $\R_t$. This implements the greedy scheme that iteratively selects from the set of candidate edges $C$ an edge that maximizes the marginal gain. The algorithm stops after selecting $k$ edges.

\textsf{Check}:  Alg.~\ref{alg:check}  verifies if the candidate solution $\hat T_k$  satisfies the given precision error $\epsilon$. It computes the error bound provided in the current iteration of \EIS{}, i.e. $\epsilon_t$ from $\epsilon_1, \epsilon_2, \epsilon_3$ (Lines~4-6), and compares that with the input $\epsilon$. This algorithm consists of a checking condition (Line~2) that examines the coverage of $\hat T_k$ on the independent set $\R'_t$ of \hsaw{} samples with $\Lambda_1$,
	\vspace{-0.05in}
	\begin{align}
		\label{eq:lambda_1}
		\Lambda_1 = 1+(1+\epsilon)(2+\frac{2}{3}\epsilon)\ln (\frac{3t_{\max}}{\delta}) \frac{1}{\epsilon^2},
	\end{align}
	\vspace{-0.15in}
	
	\noindent where $t_{\max} = \log_2\left( \frac{2N_{\max}}{(2+2/3 \epsilon) \ln (\delta/3) 1/\epsilon^2} \right)$ is the maximum number of iterations run by \EIS{} in Alg.~\ref{alg:eis} (bounded by $O(\log_2 n)$). The computations of $\epsilon_1, \epsilon_2, \epsilon_3$ are to guarantee the estimation quality of $\hat T_k$ and the optimal solution $T^*_k$.

The main algorithm in Alg.~\ref{alg:eis} first computes the upper-bound on neccessary \hsaw{} samples $N_{\max}$ i.e.,
\vspace{-0.05in}
\begin{align}
	\label{eq:guard}
	N_{\max} = (2-\frac{1}{e})^2(2+\frac{2}{3}\epsilon) m \cdot \frac{\ln (6/\delta)+\ln {m \choose k}}{k \epsilon^2},
\end{align}
\vspace{-0.15in}

\noindent and $\Lambda$ i.e.,
\vspace{-0.1in}
\begin{align}
	\label{eq:lambda}
	\Lambda = (2+\frac{2}{3}\epsilon)\ln (\frac{3 t_{\max}}{\delta}) \frac{1}{\epsilon^2},
\end{align}
\vspace{-0.15in}

\noindent Then, it enters a loop of at most $t_{\max}=O(\log n)$ iterations. In each iteration, \EIS{} uses the set $\R_t$ of first $\Lambda 2^{t-1}$  \hsaw{} samples to find a candidate solution $\hat T_k$ by the \textsf{Greedy} algorithm (Alg.~\ref{alg:max-cover}). Afterwards, it checks the quality of $\hat T_k$ by the \textsf{Check} procedure (Alg.~\ref{alg:check}). If the \textsf{Check} returns \textsf{True} meaning that $\hat T_k$ meets the error requirement $\epsilon$ with high probability, $\hat T_k$ is returned as the final solution. 

In cases when \textsf{Check} algorithm fails to verify the candidate solution $\hat T_k$ after $t_{\max}$ iteations, \EIS{} will be terminated by the guarding condition $|\R_t| \geq N_{\max}$ (Line~9).

\textbf{Optimal Guarantee Analysis.}
We prove that \EIS{} returns an $(1-1/e-\epsilon)$-approximate solution for the \ERI{} problem with probability at least $1-\delta$ where $\epsilon \in (0, 1)$ and  $1/\delta = O(n)$ are the given precision parameters.
\begin{theorem}
	\label{theo:app}
	Given a graph $\G=(\V,\E,w)$, a probabilistic set $\V_I$ of suspected nodes,  candidate edge set $C \subseteq E$, $0 < \epsilon < 1$, $1/\delta = O(n)$  as the precision parameters and budget $k$, \emph{\EIS{}} returns an $(1-1/e-\epsilon)$-approximate solution $\hat T_k$ with probability at least $1-\delta$,
	\begin{align}
	\Pr[\DE(\hat T_k) \geq (1-1/e-\epsilon)\emph{\textsf{OPT}}_k^{(e)}] \geq 1-\delta.
	\end{align}
\end{theorem}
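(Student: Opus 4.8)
The plan is to recast \ERI{} as a maximum-coverage problem over \hsaw{} samples and then carry out the Stop-and-Stare error analysis on top of it. By Theorem~\ref{lem:rr_edge}, a random \hsaw{} $h_j$ drawn from $\Omega_h$ satisfies $\Pr[T\text{ interdicts }h_j]=\DE(T)/\I_\G(\V_I)$ for every $T\subseteq E$. Hence, if $\R=\{R_1,\dots,R_{|\R|}\}$ is the i.i.d.\ stream produced by Alg.~\ref{alg:par_saw} (recall the final cycle recheck in Alg.~\ref{alg:par_saw} discards the walks that the constant-space heuristics failed to reject, so the stream is genuinely i.i.d.\ from $\Omega_h$, each $R_j$ being the edge set of a valid \hsaw{}), the coverage $\Cov_\R(T)=|\{j:T\cap R_j\neq\emptyset\}|$ is a sum of independent Bernoulli variables with $\mathbb{E}[\Cov_\R(T)]=|\R|\,\DE(T)/\I_\G(\V_I)$, so $\widehat{\DE}(T):=\I_\G(\V_I)\,\Cov_\R(T)/|\R|$ is an unbiased estimator of $\DE(T)$. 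Since $\Cov_\R(\cdot)$ is a monotone submodular coverage function, Corollary~\ref{lem:mon_sub} (applied to $\Cov_\R$) implies that \textsf{Greedy} (Alg.~\ref{alg:max-cover}) returns $\hat T_k$ with $\Cov_\R(\hat T_k)\geq(1-1/e)\max_{T\subseteq C,\,|T|=k}\Cov_\R(T)$.

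Next I would establish the ``ideal'' sample bound: if $|\R|\geq\theta$ with $\theta$ as in Eq.~\eqref{eq:theta}, then $\DE(\hat T_k)\geq(1-1/e-\epsilon)\OPTe{}$ with probability at least $1-\delta/3$. This follows from a Chernoff lower-tail bound on $\Cov_\R(T^*_k)$ for an optimal $T^*_k$, a Chernoff upper-tail bound together with a union bound over the at most $\binom{m}{k}$ sets \textsf{Greedy} might output, and the in-sample $(1-1/e)$ guarantee above; it is exactly Theorem~\ref{theo:app} of the extended version \cite{Extended}, which I invoke rather than reprove. The obstruction is that $\theta$ cannot be evaluated, containing the unknowns $\I_\G(\V_I)$ and $\OPTe{}$ --- which is what the loop of \EIS{} circumvents by maintaining, in each round $t$, a verification stream $\R'_t$ disjoint from the stream $\R_t$ fed to \textsf{Greedy}.

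The crux is to prove that \textsf{Check} (Alg.~\ref{alg:check}) is \emph{sound}: whenever it returns \textsf{True} in round $t$, the current $\hat T_k$ satisfies $\DE(\hat T_k)\geq(1-1/e-\epsilon)\OPTe{}$ with probability at least $1-\delta/(3t_{\max})$. I would show that, conditioned on the gate $\Cov_{\R'_t}(\hat T_k)\geq\Lambda_1$ of Eq.~\eqref{eq:lambda_1}, the quantities computed in the algorithm are valid high-probability bounds on, respectively, the gap between in-sample and out-of-sample coverage of $\hat T_k$ ($\epsilon_1$, the only empirical quantity, which merely links the two streams), the relative deviation of $\Cov_{\R'_t}(\hat T_k)/|\R'_t|$ from $\DE(\hat T_k)/\I_\G(\V_I)$ ($\epsilon_2$, via a multiplicative Chernoff bound for which $\Lambda_1$ guarantees the coverage is large enough), and the relative shortfall of $\Cov_{\R_t}(T^*_k)/|\R_t|$ against $\OPTe{}/\I_\G(\V_I)$ ($\epsilon_3$, likewise). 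Folding in the $(1-1/e)$ greedy factor, $\epsilon_t=(\epsilon_1+\epsilon_2+\epsilon_1\epsilon_2)(1-1/e-\epsilon)+(1-1/e)\epsilon_3$ then certifies that the relative error of $\hat T_k$ against $\OPTe{}$ is small, so $\epsilon_t\leq\epsilon$ implies $\DE(\hat T_k)\geq(1-1/e-\epsilon)\OPTe{}$ in that round. A union bound over the at most $t_{\max}=O(\log n)$ rounds --- the sample blocks are nested, so one applies the Chernoff argument afresh to each new block $\R'_t$, as in \cite{Nguyen163} --- bounds the total probability of a false acceptance by $\delta/3$.

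Finally, \emph{termination with a correct answer}: the guard $N_{\max}$ of Eq.~\eqref{eq:guard} dominates $\theta$ for every instance (comparing \eqref{eq:guard} with \eqref{eq:theta}, $N_{\max}\geq\theta$ reduces to $\OPTe{}\geq k\,\hat\I_\G(\V_I)/m$, which one verifies from how the estimate $\hat\I_\G(\V_I)$ is constructed in \cite{Extended}), so the loop runs for at most $t_{\max}$ rounds; and since $|\R_t|$ doubles each round, by the time it reaches $\theta$ --- no later than when the guard fires --- the bounds $\epsilon_1,\epsilon_2,\epsilon_3$ have shrunk enough that $\epsilon_t\leq\epsilon$ and the $\Lambda_1$ gate passes, so \textsf{Check} returns \textsf{True} w.h.p.; in the residual event that the guard fires first, $N_{\max}\geq\theta$ samples are already in hand and step two applies directly. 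Taking a union bound over the three failure events --- greedy/estimation failure at the ideal threshold, a false acceptance by \textsf{Check}, and the residual termination event, each $\leq\delta/3$ --- gives $\Pr[\DE(\hat T_k)\geq(1-1/e-\epsilon)\OPTe{}]\geq1-\delta$. I expect the main obstacle to be the third step: getting the bookkeeping of $\epsilon_1,\epsilon_2,\epsilon_3$ exactly right so that $\epsilon_t$ simultaneously upper-bounds the relative error for the candidate $\hat T_k$ and accounts for the estimation error of the unknown optimum $T^*_k$, and making the union bound over an adaptively chosen number of rounds rigorous despite the reuse of earlier samples across rounds.
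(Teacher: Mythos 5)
Your proposal follows essentially the same route as the paper's proof: the reduction to coverage of i.i.d.\ \hsaw{} samples via Theorem~\ref{lem:rr_edge}, the ideal threshold $\theta$ with a union bound over the $\binom{m}{k}$ candidate sets, the guard $N_{\max}\geq\theta$ via $\OPTe\geq \frac{k}{m}\I_\G(\V_I)$, and per-round Chernoff bounds on the two independent sample blocks with a union bound over the $t_{\max}=O(\log n)$ rounds (the paper's Lemmas on the bad events $B^{(1)}, B^{(2)}_t, B^{(3)}_t$). The bookkeeping you flag as the main obstacle is exactly where the paper works hardest --- it shows $\hat\epsilon_t\leq\epsilon_2$ using the $\Lambda_1$ gate and shows $\epsilon^*_t\leq\epsilon_3$ only under the contradiction hypothesis $\DE(\hat T_k)<(1-1/e-\epsilon)\OPTe$ --- but your outline places these pieces correctly, so the approaches coincide.
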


\vspace{-0.05in}
\textbf{Comparison to Edge Deletion in \cite{Khalil14}}.
The recent  work in \cite{Khalil14} selects $k$ edges to maximize the \textit{sum} of influence suspensions of nodes in $\V_I$ while our \ERI{} problem considers $\V_I$ as a whole and maximize the influence suspension of $\V_I$. The formulation in \cite{Khalil14} reflects the case where only a single node in $\V_I$ is the seed of propaganda and each node has the same chance. In contrast, \ERI{} considers a more practical situations in which each node $v$ in $\V_I$ can be a seed independently with probability $p(v)$. Such condition is commonly found when the propaganda have been active for some time until triggering the detection system. In fact, the method in \cite{Khalil14} can be applied in our problem and vise versa. However, \cite{Khalil14} requires an impractically large number of samples to deliver the $(1-1/e-\epsilon)$ guarantee. 

\vspace{-0.1in}
\subsection{Node-based Spread Interdiction Algorithm}
Similar to Theorem~\ref{lem:rr_edge}, we can also establish the connection between identifying nodes for removal and identifying nodes that appear frequently in \hsaw s.
\begin{theorem}
	\label{lem:rr_node}
	Given $\mathcal{G}=(V, E, w)$, a random \hsaw{} sample $h_j$ and a probabilistic set $\V_I$, for any set $S \in V$,
	\begin{align}
		\DN(S,\V_I) = \I_\G(\V_I) \Pr[S \text{ interdicts } h_j]. \nonumber
	\end{align}
\end{theorem}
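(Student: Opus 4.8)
\textbf{Proof proposal for Theorem~\ref{lem:rr_node}.}

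The plan is to mirror the proof of Theorem~\ref{lem:rr_edge} almost verbatim, replacing the role of edges on an \hsaw{} by the role of internal (or all) nodes. The key observation is that the \emph{Walk Uniqueness} property gives a clean bijection-style accounting: in a fixed sample graph $G\sim\G$ and a fixed source set $X\sim\V_I$, a node $v\in V\setminus X$ is infected by $X$ if and only if the unique backward trace from $v$ reaches $X$, i.e. if and only if there is an \hsaw{} $h$ with $\src(h)=v$. Removing a node set $S$ from $\G$ destroys exactly those infections whose witnessing \hsaw{} passes through some node of $S$; equivalently, $v$ loses its infection precisely when $S$ interdicts the \hsaw{} rooted at $v$. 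So the decrease in spread, $\DN(S,\V_I)$, counts (in expectation over $G$, $X$, and a uniformly random source) exactly the \hsaw s that $S$ hits.

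Concretely, I would proceed as follows. First, fix $G\sim\G$ and $X\sim\V_I$ and write $\I_G(X)=\sum_{v\in V}\chi^G(X,v)$. Using Walk Uniqueness, split the sum into $v\in X$ (always infected, unaffected by removing $S$ as long as we only count the suspension of the \emph{other} nodes — here one must be careful that removing a node in $S\cap X$ still leaves it ``infected'' in the sense of being a source, or else handle it as part of the candidate-set bookkeeping) and $v\in V\setminus X$, for which $\chi^G(X,v)=\mathbf 1[\exists\text{ \hsaw{} }h\text{ in }G\text{ with }\src(h)=v]$. In the residual graph $G'$ obtained by deleting $S$, the corresponding indicator becomes $\mathbf 1[\exists\text{ \hsaw{} }h\text{ in }G,\ \src(h)=v,\ h\cap S=\emptyset]$, because deleting nodes of $S$ only kills traces that meet $S$ and never creates new ones (each node still has at most one incoming edge). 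Subtracting, $\chi^G(X,v)-\chi^{G'}(X,v)=\mathbf 1[\text{the unique \hsaw{} rooted at }v\text{ exists and meets }S]$. Summing over $v$, then taking expectations over $G\sim\G$, $X\sim\V_I$, and recognizing that the resulting sum over $v$ is exactly $n$ times the probability (over the full $\Omega_h$ process, which also picks $\src(h)$ uniformly at random) that a random \hsaw{} $h_j$ is interdicted by $S$, and using $\I_\G(\V_I)=n\cdot\Pr[h_j\text{ exists}]$ — equivalently folding the normalization exactly as in the proof of Theorem~\ref{lem:rr_edge} — yields $\DN(S,\V_I)=\I_\G(\V_I)\Pr[S\text{ interdicts }h_j]$.

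The step I expect to be the main obstacle is the boundary bookkeeping around $S\cap V_I$ and around the definition of ``interdicts'' when $v_l\in S$: one must be precise about whether a source node that is itself removed counts toward the suspension, and make sure the probability normalization (the $p(v_l)\prod(1-p(u))$ factors in Eq.~\ref{eq:compute_path_inf} together with the uniform choice of $\src(h)$) matches the $\I_\G(\V_I)$ prefactor exactly. This is the same subtlety already resolved in Theorem~\ref{lem:rr_edge}; since that theorem is available to us, I would either invoke its proof structure directly or, cleaner still, derive Theorem~\ref{lem:rr_node} as a near-corollary by noting that the only change from the edge case is substituting ``$h$ uses an edge of $T$'' with ``$h$ visits a node of $S$,'' a substitution that does not affect any of the probability computations, only the event whose probability is being measured.
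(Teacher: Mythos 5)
Your proposal is correct and follows essentially the same route as the paper, which establishes the edge version (Theorem~\ref{lem:rr_edge}) by decomposing the suspension over seed sets $X \sim \V_I$ and nodes $v$, using walk uniqueness to identify each node's lost infection with the probability mass of the \hsaw{} rooted at $v$ that meets the removed set, and then normalizing by $\I_\G(\V_I)$; the node version is obtained by exactly the substitution you describe. The boundary case you flag ($S \cap V_I$) resolves as you expect: a removed node contributes zero spread in $\G'$, and every \hsaw{} rooted at or passing through it (including the degenerate walk at a removed seed) is interdicted, so the accounting matches the paper's.
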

Thus, the \NIS{} algorithm for selecting $k$ nodes to remove and maximize the influence suspension is similar to \EIS{} except:
\begin{itemize}
	\item[1)] The \textsf{Greedy} algorithm selects nodes with maximum marginal gains into the candidate solution $\hat S_k$.
	\item[2)] The maximum \hsaw{} samples is cumputed as follows,
	\vspace{-0.05in}
	\begin{align}
		N_{\max} = (2-\frac{1}{e})^2(2+\frac{2}{3}\epsilon)n \cdot \frac{\ln (6/\delta)+\ln {n \choose k}}{k\epsilon^2}.
	\end{align}
%
\end{itemize}
The approximation guarantee is stated below.
\begin{theorem}
	\label{theo:sam}
	Given a graph $\G=(\V,\E,p)$, a probabilistic set $\V_I$ of possible seeds with their probabilities, $C \subseteq V$, $0 < \epsilon < 1$, $1/\delta = O(n)$ and a budget $k$, \emph{\NIS{}} returns an $(1-1/e-\epsilon)$-approximate solution $\hat S_k$ with probability at least $1-\delta$,
	\begin{align}
	\Pr[\DN(\hat S_k) \geq (1-1/e-\epsilon) \emph{\textsf{OPT}}^{(n)}_k] \geq 1-\delta,
	\end{align}
	where $S^*_k$ is an optimal solution of $k$ nodes.
\end{theorem}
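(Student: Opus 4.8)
The plan is to mirror, step for step, the proof of Theorem~\ref{theo:app} for \EIS{}, substituting the edge ground set by the node ground set and invoking Theorem~\ref{lem:rr_node} in place of Theorem~\ref{lem:rr_edge}. First I would record the node analogue of Corollary~\ref{lem:mon_sub}: by Theorem~\ref{lem:rr_node}, $\DN(S) = \I_\G(\V_I)\,\Pr[S\text{ interdicts }h_j]$, and the event ``$S$ interdicts $h_j$'' is precisely ``$S\cap h_j\neq\emptyset$'', a coverage event over the node set of the random walk $h_j$. Hence $S\mapsto\Pr[S\text{ interdicts }h_j]$ is monotone and submodular on $2^V$, so $\DN$ is monotone submodular and the \textsf{Greedy} procedure (Alg.~\ref{alg:max-cover}, now selecting nodes from $C\subseteq V$) returns a $(1-1/e)$-optimal solution of $\max_{S\subseteq C,\,|S|=k}\Cov_{\R_t}(S)$.

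Next I would set up the empirical estimator over a stream of i.i.d.\ \hsaw{} samples $h_1,h_2,\dots$ from $\Omega_h$: define $\widehat\D_{\R_t}(S)=\I_\G(\V_I)\,\Cov_{\R_t}(S)/|\R_t|$, which Theorem~\ref{lem:rr_node} makes an unbiased estimator of $\DN(S)$ (the unknown common factor $\I_\G(\V_I)$ cancels in every comparison performed by \textsf{Greedy} and \textsf{Check}, so it is never computed). Then, exactly as in the $\theta$-sample lemma behind \EIS{}, Chernoff--Hoeffding concentration for $\Cov_{\R_t}(\hat S_k)$ together with a union bound over the $\binom nk$ candidate optimal $k$-node sets shows that with $\theta^{(n)}=(2-\tfrac1e)^2(2+\tfrac23\epsilon)\,\hat\I_\G(\V_I)\,\frac{\ln(6/\delta)+\ln\binom nk}{\OPTn\,\epsilon^2}$ samples, \textsf{Greedy} on those samples is $(1-1/e-\epsilon)$-approximate with probability at least $1-\delta/3$. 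The only changes from the edge case are the substitutions $m\to n$ and $\binom mk\to\binom nk$; these also turn Eq.~\ref{eq:guard} into the $N_{\max}$ quoted for \NIS{}, once the supporting worst-case inequalities (in particular the submodularity lower bound on $\OPTn$ in terms of $\DN(C)$ and $|C|\le n$) are re-derived over node sets, which is immediate.

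Since $\theta^{(n)}$ involves the unknowns $\hat\I_\G(\V_I)$ and $\OPTn$, I would then invoke the Stop-and-Stare argument of \cite{Nguyen163} verbatim: \NIS{} doubles $|\R_t|$ each round, uses the first half $\R_t$ to produce a candidate $\hat S_k$ via \textsf{Greedy} and the independent second half $\R'_t$ to test it via \textsf{Check} with the thresholds $\Lambda_1,\epsilon_1,\epsilon_2,\epsilon_3,\epsilon_t$ of Alg.~\ref{alg:check}. Two facts are needed and both transfer without change from the \EIS{} analysis, because \textsf{Check} only uses the coverage concentration above, the $(1-1/e)$-optimality of \textsf{Greedy}, and the bound $t_{\max}=O(\log n)$ on the number of rounds: (i) whenever \textsf{Check} returns \textsf{True} the candidate is $(1-1/e-\epsilon)$-approximate with the prescribed per-round confidence, and (ii) \textsf{Check} must return \textsf{True} before the guard $|\R_t|\ge N_{\max}$ fires. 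A union bound over the $\le t_{\max}$ rounds and the guard event yields the overall $1-\delta$, giving $\Pr[\DN(\hat S_k)\ge(1-1/e-\epsilon)\OPTn]\ge1-\delta$.

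The \textbf{main obstacle} is bookkeeping rather than a new idea: one must verify that every constant and every union-bound term in the \EIS{} proof degrades gracefully under $m\to n$, $\binom mk\to\binom nk$ (in $\theta^{(n)}$, $N_{\max}$, $\Lambda$, $\Lambda_1$, and $t_{\max}$), and --- the one genuinely combinatorial point --- that the coverage function $S\mapsto\Cov_{\R_t}(S)$ over \emph{node} subsets is still monotone submodular, so that the $(1-1/e)$ guarantee of \textsf{Greedy} survives the switch from edges to nodes. No additional probabilistic machinery beyond what Theorem~\ref{theo:app} already uses is required.
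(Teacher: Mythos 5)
Your proposal is essentially the paper's own argument: the paper proves Theorem~\ref{theo:app} in full and obtains Theorem~\ref{theo:sam} by exactly this edge-to-node transfer (Theorem~\ref{lem:rr_node} in place of Theorem~\ref{lem:rr_edge}, the weighted-coverage view giving monotonicity and submodularity over node sets, $m\to n$ and $\binom{m}{k}\to\binom{n}{k}$ in $\theta$, $N_{\max}$, $\Lambda$, $\Lambda_1$, $t_{\max}$, and the unchanged Stop-and-Stare analysis with \textsf{GreedyNode}/\textsf{CheckNode}). One small correction: your fact (ii) --- that \textsf{Check} must return \textsf{True} before the guard $|\R_t|\ge N_{\max}$ fires --- is neither needed nor established; the guard-termination case is instead covered by the $N_{\max}$ cap lemma (the bad event $B^{(1)}$), which you already derived, so the union bound over $B^{(1)}$, $B^{(2)}_t$, $B^{(3)}_t$ completes the proof as in the paper.
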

Both the complete algorithm for node-based Spread Interdiction and the proof of Theo.~\ref{theo:sam} is presented in our extended version~\cite{Extended}.

\section{Experimental Evaluation}
\label{sec:exp}

\begin{figure*}[!ht]
	\vspace{-0.1in}
	\subfloat[Pokec]{
		\includegraphics[width=0.24\linewidth]{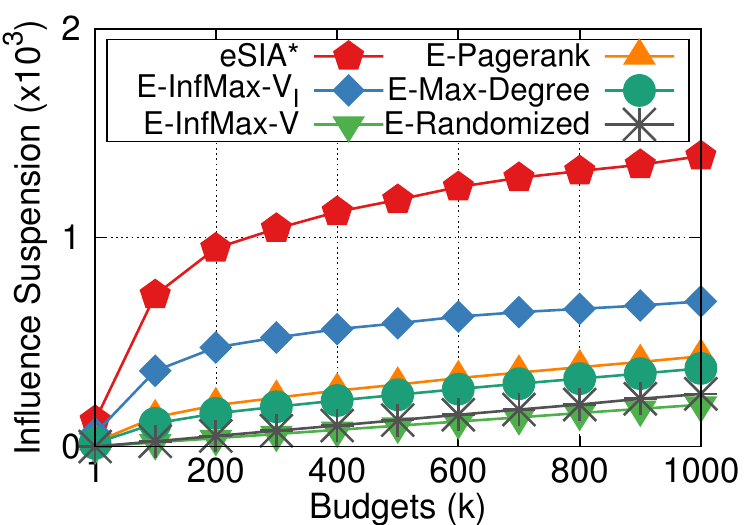}
	}
	\subfloat[Skitter]{
		\includegraphics[width=0.24\linewidth]{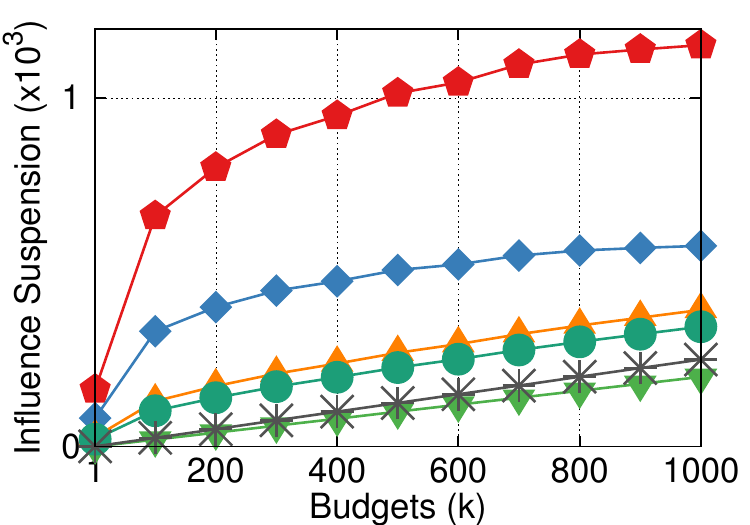}
	}
	\subfloat[LiveJournal]{
		\includegraphics[width=0.24\linewidth]{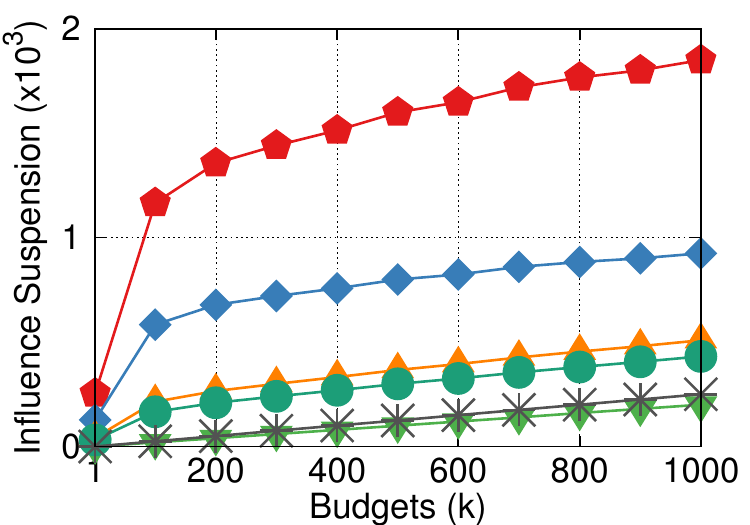}
	}
	\subfloat[Twitter]{
		\includegraphics[width=0.24\linewidth]{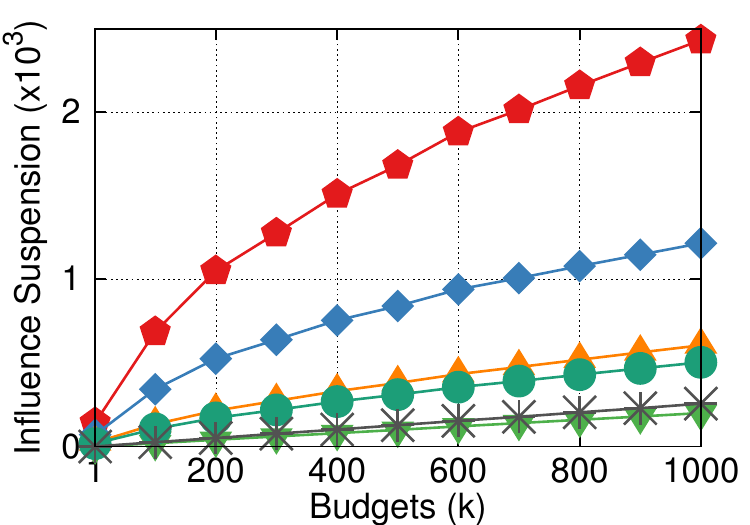}
	}
	\vspace{-0.15in}
	\caption{\small Interdiction Efficiency of different approaches on \ERI{} problem (\EIS{}* denotes the general \EIS{})}
	\label{fig:edge_inf}
	\vspace{-0.1in}
\end{figure*}

In this section, we present the results of our comprehensive experiments on  real-world networks. The results suggest the superiority of \NIS{} and \EIS{} over the other methods. 

\vspace{-0.1in}
\subsection{Experimental Settings}
\textbf{Algorithms compared}. For each of the studied problems, i.e., \NRI{} and \ERI{}, we compare three sets of algorithms:
\begin{itemize}
	\item \NIS{} and \EIS{} - our proposed algorithms, each of which has five implementations: single/multi-core CPU, and single/parallel/distributed GPU accelerations.
	\item \textsf{InfMax}-$V$ and \textsf{InfMax}-$V_I$ - algorithms for Influence Maximization problem, that finds the set of $k$ nodes in $C$ that have the highest influence.  For the edge version, we follow \cite{Khalil14} to select $k$ edges  that go \textit{into} the highest influence nodes.
	\item \textsf{GreedyCutting} \cite{Khalil14} on edge deletion problem.
	\item Baseline methods: we consider 3 common ranking measures: Pagerank, Max-Degree and Randomized.
\end{itemize}

\textbf{Datasets}. Table~\ref{tab:data_sum} provides the summary of 5 datasets used.

\setlength\tabcolsep{2pt}
\begin{table}[!htb] \small
	\caption{\small Datasets' Statistics}
	\vspace{-0.15in}
	\label{tab:data_sum}
	\centering
	\begin{tabular}{ l  r  r  c r}\toprule
		\textbf{Dataset} & \bf \#Nodes ($n$)& \bf \#Edges ($m$) & \bf Avg. Deg. & \bf Type\\
		\midrule
		\textsf{DBLP}\textsuperscript{(*)} & 655K & 2M & 6.1 & Co-author\\
		\textsf{Pokec}\textsuperscript{(*)} & 1.6M & 30.6M & 19.1 & Social\\
		\textsf{Skitter}\textsuperscript{(*)} & 1.7M & 11.1M & 6.5 & Internet\\
		\textsf{LiveJournal}\textsuperscript{(*)} & 4M & 34.7M & 8.7 & Social\\
		\textsf{Twitter}\cite{Kwak10} & 41.7M & 1.5G & 70.5 & Social\\
		\bottomrule
		\multicolumn{5}{r}{\textsuperscript{(*)}\footnotesize{http://snap.stanford.edu/data/index.html}}\\
	\end{tabular}
\end{table}


\setlength\tabcolsep{6pt}
\def\arraystretch{1}
\captionsetup{width=0.68\textwidth}
\begin{table*}[t]\small
	\begin{minipage}[t]{0.71\textwidth } \centering
		\vspace{-1.1in}
		\begin{tabular}{l | rrrrrrrrrrr}
			\toprule
			\multirow{2}{*}{\textbf{Dataset}} & \multicolumn{2}{c}{\textbf{1 CPU core}} && \multicolumn{2}{c}{\textbf{8 CPU cores}} && \multicolumn{2}{c}{\textbf{1 GPU}} &&  \multicolumn{2}{c}{\textbf{par-2 GPUs}}\\
			\cline{2-3}\cline{5-6}\cline{8-9}\cline{11-12} 
			& \textsf{time(s)} & \textsf {SpF} && \textsf{time(s)} & \textsf {SpF} && \textsf{time(s)} & \textsf {SpF} && \textsf{time(s)} & \textsf {SpF} \\
			\midrule
			\textsf{DBLP} & 10.3 & 1 && 1.7 & 6.0 && 0.1 & 103 && 0.05 & 206\\
			\textsf{Pokec} & 36.6 & 1 && 5.6 & 6.5 && 0.3 & 122 && 0.2 & 183\\
			\textsf{Skitter} & 34.1 & 1 && 5.1 & 6.5 && 0.2 & 169 && 0.1 & \textbf{338}\\
			\textsf{LiveJ} & 70.9 & 1 && 10.1 & \textbf{7.0} && 0.4 & \textbf{177} && 0.25 & 283 \\
			\textsf{Twitter} & 2517.6 & 1 && 371.2 & 6.8 && 20.5 & 123 && 12.6 & 200\\
			\bottomrule
		\end{tabular}
		\caption{\small Running time and Speedup Factor (\textsf{SpF}) \EIS{} on various platforms ($k = 100$, \textbf{par-2 GPUs} refers to parallel algorithm on 2 GPUs).}
		\label{tab:cpu_gpu}
	\end{minipage}
	\begin{minipage}[t]{0.28\textwidth}
		\includegraphics[width=0.8\textwidth]{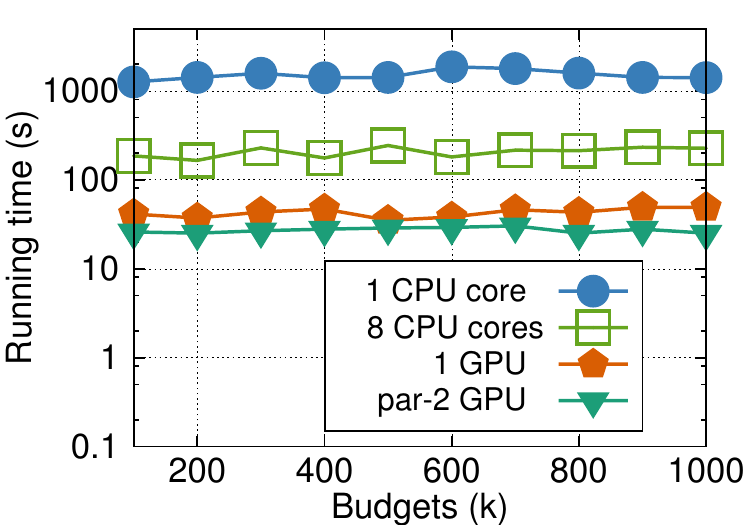}
		\vspace{-0.1in}
		\captionof{figure}{\small Running time on Twitter}
		\label{fig:cpu_gpu_twitter}
	\end{minipage}
	\vspace{-0.35in}
\end{table*}

\captionsetup{width=0.48\textwidth}

\textbf{Measurements}. We measure the performance of each algorithm in two aspects: Solution quality and Scalability. To compute the influence suspension, we adapt the EIVA algorithm in \cite{Nguyen163} to find an $(\epsilon,\delta)$-estimate $\hat \D(T, \V_I)$,
\begin{align}
	\Pr[|\hat \D(T, \V_I) - \D(T,\V_I)| \geq \epsilon\D(T,\V_I)] \leq \delta,
\end{align}
where $\epsilon,\delta$ are set to $0.01$ and $1/n$ (see details in \cite{Nguyen163}).

\textbf{Parameter Settings}. We follow a common setting in \cite{Tang15,Nguyen16,Nguyen163} and set the weight of edge $(u,v)$ to be $w(u,v) = \frac{1}{d_{in}(v)}$. Here $d_{in}(v)$ denotes the in-degree of node $v$. For simplicity, we set $C = E$ (or $C = V$) in the edge (or node) interdiction problem(s). For \NIS{} and \EIS{} algorithms, we set the precision parameters to be $\epsilon = 0.1$ and $\delta = 1/n$ as a general setting. Following the approach in \cite{Khalil14} the suspected set of nodes $\V_I$ contains randomly selected 1000 nodes with randomized probabilities between 0 and 1 of being real suspects. The budget value $k$ is ranging from 100 to 1000.

All the experiments are carried on a CentOS 7 machine having 2 Intel(R) Xeon(R) CPUs X5680 3.33GHz with 6 cores each, 2 NVIDIA's Titan X GPUs and 100 GB of RAM. The algorithms are implemented in C++ with C++11 compiler and CUDA 8.0 toolkit.

\vspace{-0.12in}
\subsection{Solution Quality}


%

\begin{figure}[!ht]
	\vspace{-0.15in}
	\subfloat[Influence Suspension]{
		\includegraphics[width=0.49\linewidth]{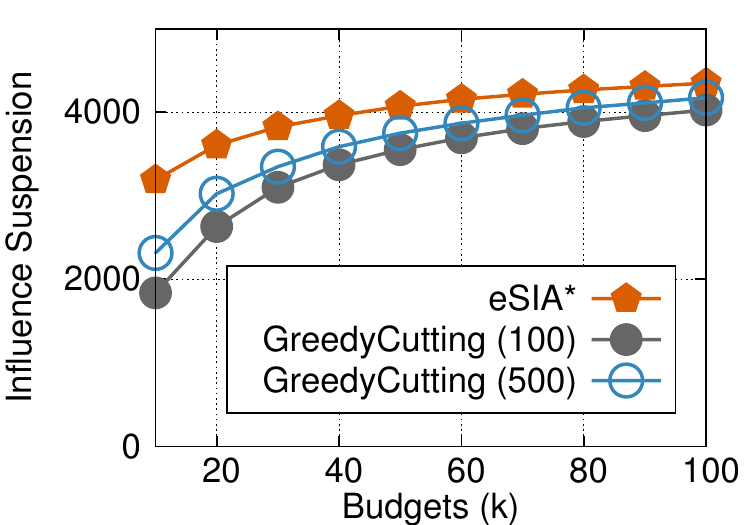}
	}
	\subfloat[Running time]{
		\includegraphics[width=0.49\linewidth]{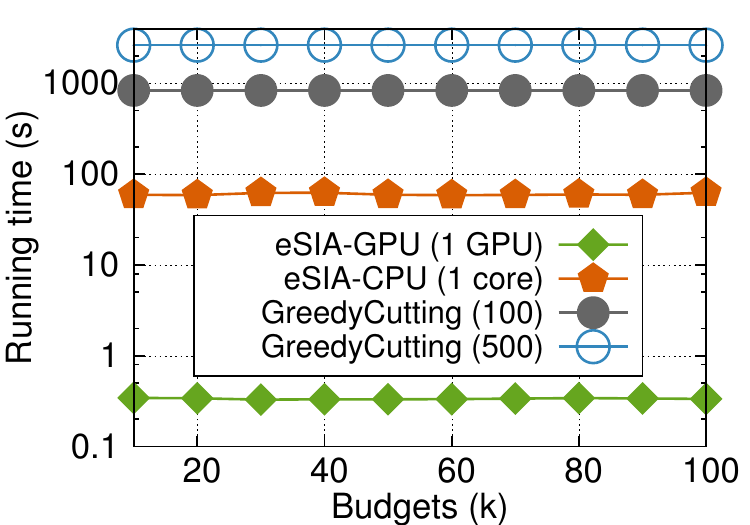}
	}
	\vspace{-0.15in}
	\caption{\small Comparison between \EIS{} and GreedyCutting with Edge Deletion Problem on Skitter network.}
	\label{fig:eis_greeycut}
\end{figure}

The results of comparing the solution quality, i.e., influence suspension, of the algorithms on four larger network datasets, e.g., Pokec, Skitter, LiveJournal and Twitter, are presented in Fig.~\ref{fig:edge_inf} for \ERI{}. Across all four datasets, we observe that \EIS{} significantly outperforms the other methods with widening margins when $k$ increases. \EIS{} performs twice as good as \textsf{InfMax-$V_I$} and many times better than the rest. Experiments on \NIS{} give similar observation and the complete results are presented in our extended version.

\textbf{Comparison with GreedyCutting \cite{Khalil14}}. We compare \EIS{} with the GreedyCutting \cite{Khalil14} which solves the slightly different Edge Deletion problem that interdicts the \textit{sum} of nodes' influences while \ERI{} minimizes the combined influence. Thus, to compare the methods for the two problems, we set the number of sources to be 1. Since we are interested in interdicting nodes with high impact on networks, we select top 10 nodes in Skitter network\footnote{Skitter is largest network that we could run GreedyCutting due to an unknown error returned by that algorithm.} with highest degrees and randomize their probabilities. We carry 10 experiments, each of which takes 1 out of 10 nodes to be the suspect. For GreedyCutting, we keep the default setting of 100 sample graphs and also test with 500 samples. We follow the edge probability settings in \cite{Khalil14} that randomizes the edge probabilities and then normalizes by,
\vspace{-0.08in}
\begin{align}
	w(u,v) = \frac{w(u,v)}{\sum_{(w,v)\in E}w(w,v)}
\end{align}
\vspace{-0.1in}

\noindent so that the sum of edge probabilities into a node is 1. Afterwards, we take the average influence suspension and running time over all 10 tests and the results are drawn in Fig.~\ref{fig:eis_greeycut}.

\textit{Results:} From Fig.~\ref{fig:eis_greeycut}, we see that clearly \EIS{} both obtains notably better solution quality, i.e., 10\% to 50\% higher, and runs substantially faster by a factor of up to 20 (CPU 1 core) and 1250 (1 GPU) than GreedyCutting. Comparing between using 100 and 500 sample graphs in GreedyCutting, we see improvements in terms of Influence Suspension when 500 samples are used, showing the \textit{quality degeneracy} of using insufficient graph samples.

\vspace{-0.1in}
\subsection{Scalability}

This set of experiments is devoted for evaluating the running time of \NIS{} and \EIS{} implementations on multi-core CPUs and single or multiple GPUs on large networks.

\vspace{-0.08in}
\subsubsection{Parallel implementations on GPU(s) vs CPUs}

We experiment the different parallel implementations: 1) single/multiple GPU, 2) multi-core CPUs to evaluate the performance in various computational platforms. Due to the strong similarity between \NIS{} and \EIS{} in terms of performance, we only measure the time and speedup factor (\textsf{SpF}) for \NIS{}. We use two Titan X GPUs for testing multiple GPUs. The results are shown in Table~\ref{tab:cpu_gpu} and Fig.~\ref{fig:cpu_gpu_twitter}.

\textbf{Running time.} From Table~\ref{tab:cpu_gpu}, we observe that increasing the number of CPUs running in parallel achieves an effective speedup of 80\% per core meaning with 8 cores,\NIS{} runs 6.5 times faster than that on a single CPU. On the other hand, using one GPU reduces the running time by 100 to 200 times while two parallel GPUs helps almost double the performance, e.g., 200 vs. 123 times faster on Twitter. Fig.~\ref{fig:cpu_gpu_twitter} confirms the speedups on different budgets.

\captionsetup{width=0.23\textwidth}
\begin{figure}[h]
	\begin{minipage}[h]{0.235\textwidth}
		\vspace{-0.2in}
		\centering
		\includegraphics[width=\linewidth]{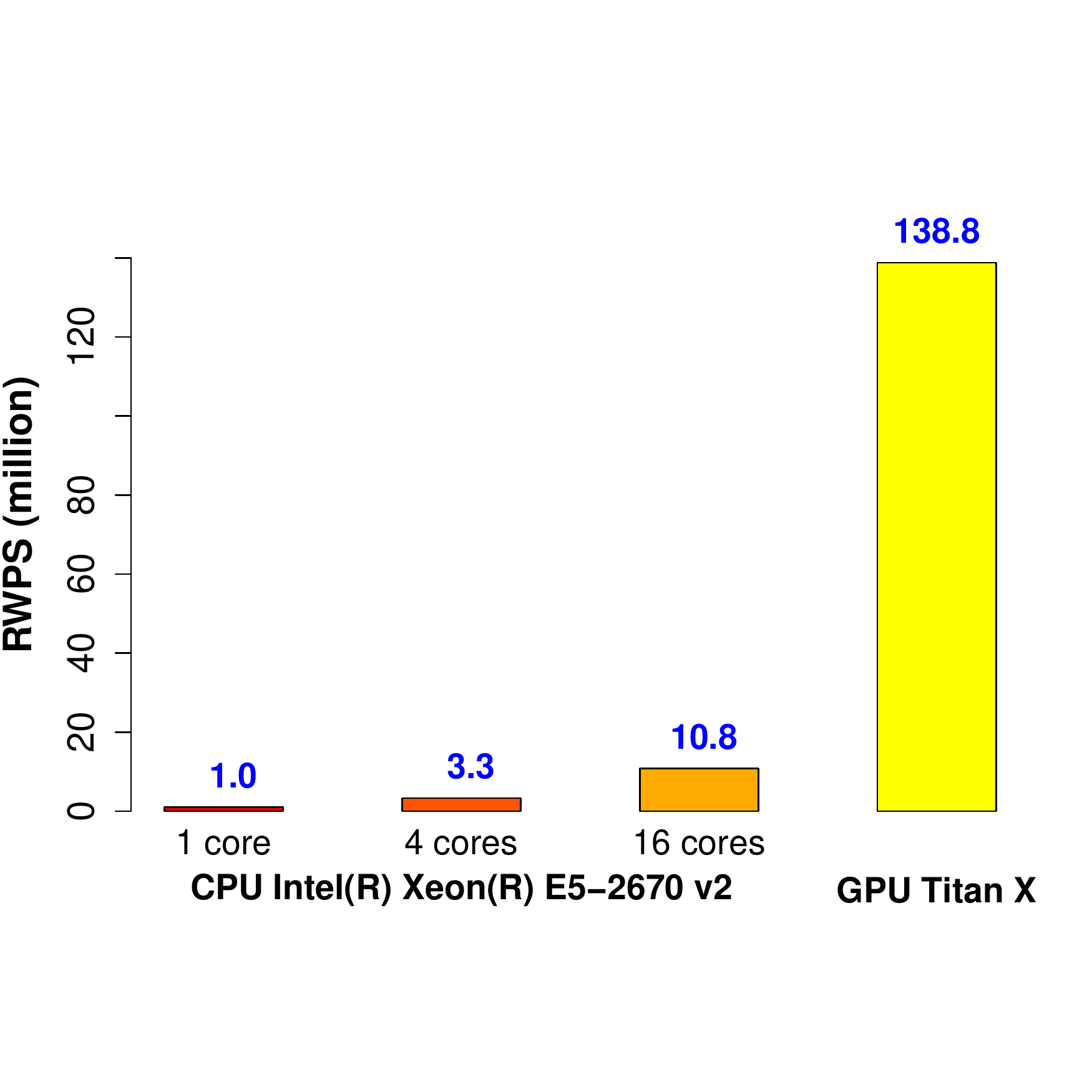}
		\vspace{-0.55in}
		\caption{\small Random walks per second on various platforms}
		\label{fig:rwps}
		\vspace{-0in}
	\end{minipage}
	\begin{minipage}[h]{0.235\textwidth }
		\vspace{-0.3in}
		\includegraphics[width=\linewidth]{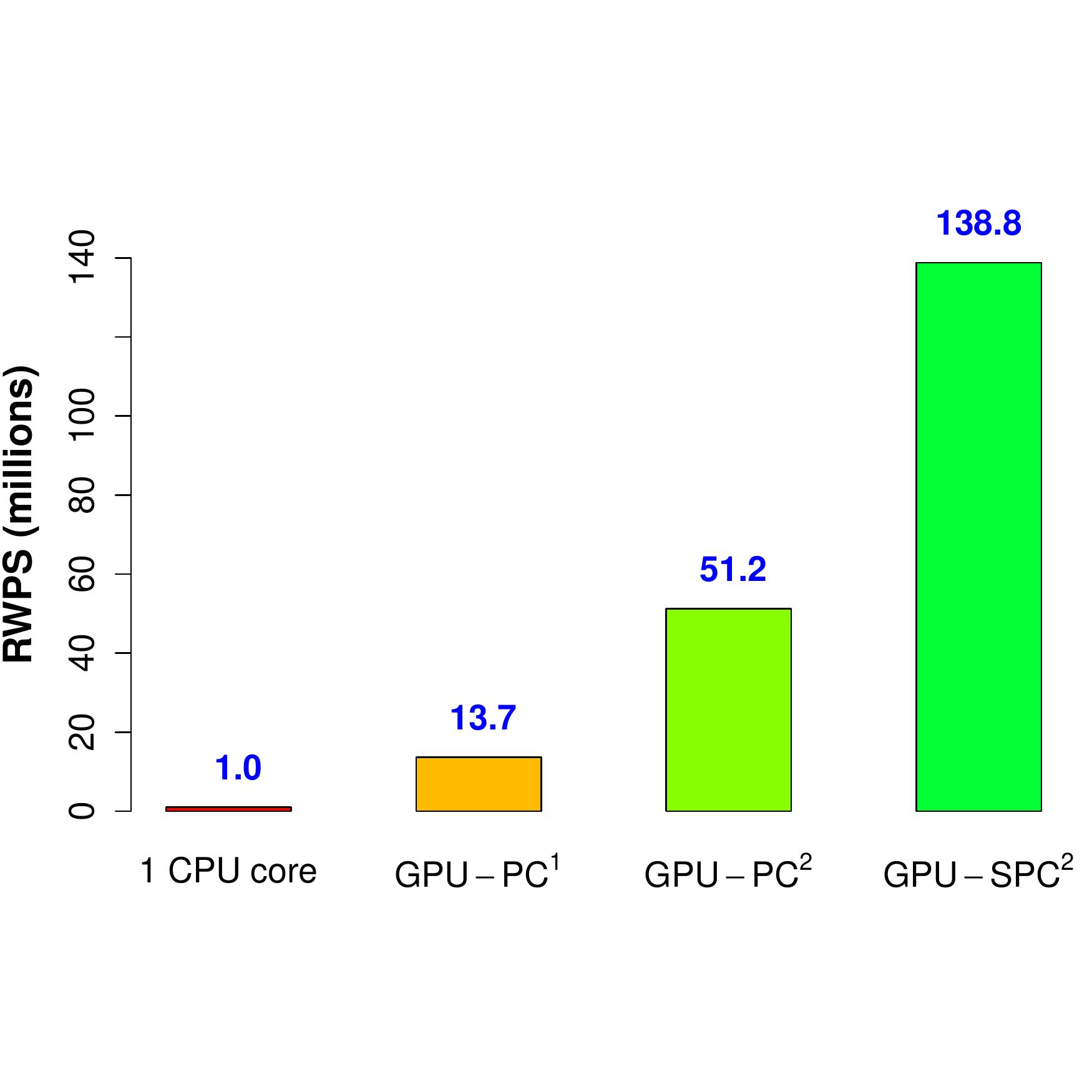}
		\vspace{-0.55in}
		\caption{\small Effects of different acceleration techniques}
		\label{fig:gpu_ver_node}
		\vspace{-0in}
	\end{minipage}
\end{figure}

\textbf{Random Walk Generating Rate.}
We compare the rates of generating random walks (samples) on different parallel platforms, i.e., GPU and CPU. The results are described in Fig.~\ref{fig:rwps}.

Unsurprisingly, the rate of random walk generation on CPU linearly depends on the number of cores achieving nearly 70\% to 80\% effectiveness. Between GPU and CPU, even with 16 cores of CPU, only 10.8 million random walks are generated per second that is around 13 times less than that on a GPU with 139 million.

\captionsetup{width=0.23\textwidth}
\begin{figure}[h]
	\begin{minipage}[h]{0.235\textwidth}
		\centering
		\includegraphics[width=\linewidth]{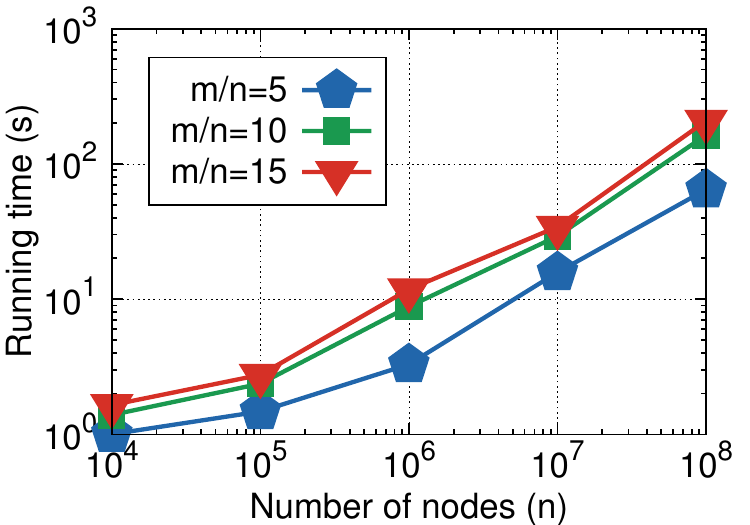}
		\vspace{-0.25in}
		\caption{Scalability tests on varying network sizes.}
		\label{fig:scale}
	\end{minipage}
	\begin{minipage}[h]{0.235\textwidth }
		\centering
		\includegraphics[width=0.9\linewidth]{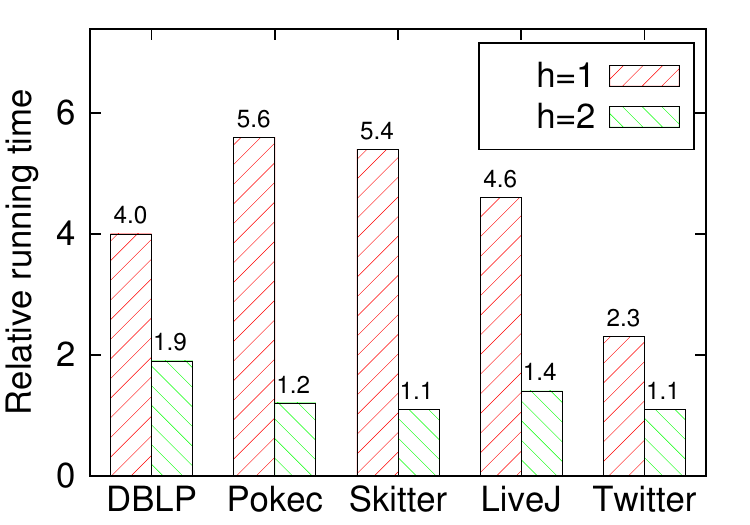}
		\vspace{-0.0in}
		\caption{\small Distributed algorithm on 2 GPUs.}
		\label{fig:dis}
	\end{minipage}
\end{figure}
\textbf{Scalability Test.}
We carry another test on the scalability of our GPU implementation. We create synthetic networks by GTgraph \cite{Gtgraph} with number of nodes, denoted by $n$, increasing from tens of thousands, $10^4$, to hundreds of millions, $10^8$. For each value of $n$, we also test on multiple densities, i.e., ratio of edges to nodes $m/n$. Specifically, we test with densities of $5,10$ and $15$. Our results are plotted in Figure~\ref{fig:scale}. The results show that the running time of \NIS{} increases almost linearly with the size of the network.

\vspace{-0.1in}
\subsubsection{Distributed algorithm on multiple GPUs}
We implemented our distributed \NIS{} algorithm on two GPUs and compared the performance with that on a single GPU. For the distributed version, we test on two values of extension parameter $h = 1$ and $h = 2$. The results are presented in Fig.~\ref{fig:dis}. We see that the distributed algorithm on multiple GPUs is several times slower than on a single GPU. However, this can be addressed by extending each partition to include nodes which are at most two hops away.
\vspace{-0.1in}
\subsubsection{Effects of Acceleration Techniques on GPUs}
We experimentally evaluate the benefit of our acceleration techniques. We compare 3 different versions of \NIS{} and \EIS{}: 1) \textsf{GPU-PC$^1$} which employs $O(1)$-space Path-encoding and $O(1)$-space Cycle-detection by the slow Floyd's algorithm; 2) \textsf{GPU-PC$^2$} which employs $O(1)$-space Path-encoding and $O(1)$-space Cycle-detection by the fast Brent's algorithm; 3) \textsf{GPU-SPC$^2$} which applies all the techniques including the empirical Sliding-window early termination. We run four versions on all the datasets and compute the RWPS compared to that on a single-core CPU. The average results are in Fig.~\ref{fig:gpu_ver_node}.

The experiment results illustrate the huge effectiveness of the acceleration techniques. Specifically, the $O(1)$-space Path-encoding combined with the slow Floyd's algorithm for Cycle-detection (\textsf{GPU-PC$^1$}) helps boost up the performance by 14x. When the fast Brent's algorithm is incorporated for $O(1)$-space Cycle-detection, the speedup is further increased to 51x while applying all the techniques effectively improves the running time up to 139x faster.
	\vspace{-0.18in}
\section{Related Work}
\label{sec:relatedwork}
Severa works have been proposed for removing/adding nodes/edges to minimize or maximize the influence of a node set in a network. \cite{Kimura08,Kuhlman13} proposed heuristic algorithms under the linear threshold model and its deterministic version. \cite{He12} studies the influence blocking problem under the competitive linear threshold model, that selects $k$ nodes to initiate the inverse influence propagation to block the initial cascade. Misinformation containment has also been widely studied in the literature \cite{Nguyen13,Kumar14}. Other than LT model, the node and edge interdiction problems were studied under other diffusion models: \cite{Tong12} consider the SIR (Susceptible-Infected-Recovery) model while \cite{Kimura09} considers the IC model.

The closest to our work is  \cite{Khalil14} in which the authors study two problems under the LT model: removing $k$ edges to minimize the sum over influences of nodes in a set and adding $k$ edges to maximize the sum. They prove the monotonicity and submodularity of their objective functions and then develop two approximation algorithms for the two corresponding problems. However, their algorithms do not provide a rigorous approximation factor due to relying on a fixed number of simulations. In addition, there is no efficient implementation for billion-scale networks.

Another closely related line of works is on Influence Maximization \cite{Kempe03,Du13,Tang15,Nguyen163} which selects a set of seed node that maximizes the spread of influence over the networks. Chen et al. \cite{Chen100} proved that estimating influence of a set of nodes is \#P-hard by counting simple paths (self-avoiding walks). Learning the parameters in propagation model have equally attracted great research interest \cite{Goyal10,Cha10}. Network interdiction problems have intensively studied, e.g., interdicting maximum flow, shortest path, minimum spanning tree and many others (see \cite{Zenklusen15} and the references therein).

GPUs have recently found effective uses in parallelizing and accelerating the practical performance of many problems. Related to our spread interdiction problems, Liu et al. \cite{Liu14} propose an GPU-accelerated Greedy algorithm for the well-studied Influence Maximization problem to process sample graphs. 
In another direction, \cite{Merrill12,Liu16} and a several follow-ups study GPUs on the fundamental Breadth-First-Search problem and achieve good performance improvement.

	\vspace{-0.15in}	
	\section{Conclusion}
    \label{sec:con}
	This paper aims at stopping an epidemic in a stochastic networks $\G$ following the popular Linear Threshold model. The problems ask for a set of nodes (or edges) to remove from $\G$ such that the influence after removal is minimum or the \textit{influence suspension} is maximum. We draw an interesting connection between the Spread Interdiction problems and the concept of \textit{Self-avoiding Walk} (\saw{}). We then propose two near-optimal approximation algorithms. To accelerate the computation, we propose three acceleration techniques for parallel and distributed algorithms on GPUs. Our algorithms show large performance advantage in both solution quality and running time over the state-of-the-art methods.
	
	%
	\vspace{-0.15in}
	\bibliographystyle{ACM-Reference-Format}
	\bibliography{dijkstra,infection,isi,nphard,pids,social,targetedIM,randgen,gpu,networkInterdiction,saw}
	%
	%
    \balance
	\newpage
\appendix

\section{Proofs of Lemmas and Theorems}
We summarize the commonly used notations in Table~\ref{tab:syms}.
\renewcommand{\arraystretch}{1.3}

\setlength\tabcolsep{3pt}
\begin{table}[h]\small
	\centering
	\caption{Table of notations}
	\vspace{-0.1in}
	\begin{tabular}{p{1.5cm}|p{6.5cm}}
		\addlinespace
		\toprule
		\bf Notation  &  \quad \quad \quad \bf Description \\
		\midrule 
		$n, m$ & \#nodes, \#edges of graph $\G=(V, E, w)$.\\
		\hline
		$G \sim \G$ & A sample graph $G$ of $\G$\\
		\hline
		$\I_\G(\V_I)$ & Influence Spread of $\V_I$ in $\G$.\\
		\hline
		$\OPTe$ & The maximum influence suspension by removing at most $k$ edges.\\
		\hline
		$\OPTn$ & The maximum influence suspension by removing at most $k$ nodes.\\
		\hline
		$\hat T_k, \hat S_k$ & The returned size-$k$ edge set of \EIS{} and \NIS{}.\\
		\hline
		$T^*_k, S^*_k$ & An optimal size-$k$ set of edges and nodes.\\
		\hline
		$\R_t, \R'_t$ & Sets of random \hsaw{} samples in iteration $t$.\\
		\hline
		$\Cov_{\R_t}(T)$ & \#\hsaw{} $h_j \in \R_t$ intersecting with $T$.\\           
		\hline
		$\Lambda$ &
		$\Lambda = (2+\frac{2}{3}\epsilon)\ln (\frac{3 t_{\max}}{\delta}) \frac{1}{\epsilon^2},$.\\
		\hline
		$\Lambda_1$ & $\Lambda_1 = 1+(1+\epsilon)(2+\frac{2}{3}\epsilon)\ln (\frac{3 t_{\max}}{\delta}) \frac{1}{\epsilon^2},$\\
		\bottomrule
	\end{tabular}
	\label{tab:syms}
\end{table}

\subsection*{Proof of Theorem~\ref{theo:hard}}
We prove that both \NRI{} and \ERI{} cannot be approximated within $1-1/e-o(1)$, that also infers the NP-hardness of these two problems.

\textbf{\NRI{} cannot be approximated within $1-1/e-o(1)$.}
We prove this by showing that Influence Maximization problem \cite{Kempe03} is a special case of \NRI{} with a specific parameter setting. Considering an instance of Influence Maximization problem which finds a set of $k$ nodes having the maximum influence on the network on a probabilistic graph $\G = (V,E,w)$, we construct an instance of \NRI{} as follows: using the same graph $\G$ with $\V_I = V$ and $\forall v \in V, p(v) = 1/2$ and candidate nodes are all the nodes in the graph, i.e., $C = V$.

On the new instance, for a set $S$ of $k$ nodes, we have,
\begin{align}
	\D(S, \V_I) &= \I_{\G}(\V_I) - \I_{\G'}(\V_I) \nonumber \\
	&= \sum_{X \sim \V_I} (\I_{\G}(X) - \I_{\G'}(X)) \Pr[X \sim \V_I].
\end{align}
Since $\forall v \in V, p(v) = 1/2$, from Eq.~\ref{eq:prob_src_set}, we have,
\begin{align}
	\Pr[X \sim \V_I] = 1/2^n.
\end{align}
Thus,
\begin{align}
	\D(S, \V_I) &= \frac{1}{2^n}\sum_{X \sim \V_I} (\I_{\G}(X) - \I_{\G'}(X)) \nonumber \\
	&= \frac{1}{2^n}\sum_{X \sim \V_I} \sum_{v \in V} (\I_{\G}(X,v) - \I_{\G'}(X,v)),
	\label{eq:dsvi}
\end{align}
where $G, G'$ are sampled graph from $\G$ and $\G'$. We say $G$ is \textit{consistent}, denoted by $G \propto G'$, if every edge $(u,v)$ appearing in $G'$ is also realized in $G$. Thus, each sampled graph $G'$ of $\G'$ corresponds to a class of samples $G$ of $\G$. We define it as the consistency class of $G'$ in $\G$, denoted by $\mathcal{C}_{G'} = \{ G\sim \G | G \propto G' \}$. More importantly, we have,
\begin{align}
	\Pr[G' \sim G] = \sum_{G \in \mathcal{C}_{G'}} \Pr[G \sim \G].
\end{align}
Note that if $G'_1 \neq G'_2$ are sampled graph from $\G'$, then $\mathcal{C}_{G'_1} \cap \mathcal{C}_{G'_2} = \emptyset$. Thus, we obtain,
\begin{align}
	\I_{\G'}(X,v)& = \sum_{G' \sim \G'}\chi^{G'}(X,v)\Pr[G' \sim \G'] \nonumber \\
	& = \sum_{G \sim \G} (\chi^G(X,v)- \chi^G(X,S,v) ) \Pr[G \sim \G],
\end{align}
where 
\begin{align}
\chi^G(X,S,v) = \twopartdef {1} {v \text{ is only reachable from } X \text{ through } S} {0} \nonumber
\end{align}
Hence, $\I_{\G'}(X,v) = \I_{\G}(X,v) - \sum_{G \sim \G} \chi^G(X,S,v) \Pr[G \sim \G]$. Put this to Eq.~\ref{eq:dsvi}, we have,
\begin{align}
	\D(S, \V_I) &= \frac{1}{2^n} \sum_{G \sim \G} \sum_{v \in V} \sum_{X \sim \V_I} \chi^G(X,S,v) \Pr[G \sim \G] \nonumber \\
	&= \frac{1}{2^n} \sum_{G \sim \G} \sum_{v \in \I_G(S)} \sum_{X \sim \V_I} \chi^G(X,S,v) \Pr[G \sim \G] \nonumber\\
	& = \frac{1}{2^n} \sum_{G \sim \G} \sum_{v \in \I_G(S)} 2^{n-1} \Pr[G \sim \G] = \frac{1}{2} \I_{\G}(S),
\end{align}
where the third equality is due to the property of the LT model that for a node, there exists at most one incoming edge in any sample $G \sim \G$.

Therefore, we have $\D(S, \V_I) = 1/2 \I_{\G}(S)$ where $\I_{\G}(S)$ is the influence function which is well-known to be NP-hard and not able to be approximated within $1-1/e-o(1)$. Thus, $\D(S, \V_I)$ possesses the same properties.

\textbf{\ERI{} cannot be approximated within $1-1/e-o(1)$. } Based on the hardness of \NRI{}, we can easily prove that of the \ERI{} by a reduction as follow: assuming an instance of \ERI{} on $\G = (V,E,w)$ and $\V_I$, for $(u,v) \in E$, we add a node $e_{uv}$ and set $w_{ue_{uv}} = w_{uv}, w_{e_{uv}v} = 1$. We also restrict our node selection to $e_{uv}$ where $(u,v) \in E$. As such, the \ERI{} is converted to a restricted \NRI{} problem which is NP-hard and cannot be approximated within $1-1/e-o(1)$.

\subsection*{Proof of Theorem~\ref{lem:rr_edge}}

From the definition of influence suspension function for a set $T_k$ of edges (Eq.~\ref{eq:sus_nod}), we have,
\begin{align}
\DE(T_k, &\V_I) = \I_{\G}(\V_I) - \I_{\G'}(\V_I)\nonumber \\
&= \sum_{X \sim \V_I} [\I_{\G}(X) - \I_{\G'}(X)] \Pr[X \sim \V_I] \nonumber \\
&= \sum_{X \sim \V_I} \sum_{v \in V} [\I_{\G}(X,v) - \I_{\G'}(X,v)] \Pr[X \sim \V_I].
\label{eq:decompose}
\end{align}

In Eq.~\ref{eq:decompose}, the set $X$ is a sample set of $\V_I$ and deterministic. We will extend the term $\I_{\G}(X,v) - \I_{\G'}(X,v)$ inside the double summation and then plug in back the extended result. First, we define the notion of collection of \hsaw{}s from $X$ to a node $v$.

\textit{Collection of \hsaw{}s}. In the original stochastic graph $\G$ having a set of source nodes $X$, for a node $v$, we define a collection $\mathcal{P}_{X,v}$ of \hsaw{}s to include all possible \hsaw{}s $h$ from a node in $X$ to $v$,
\begin{align}
\mathcal{P}_{X,v} = \{h = <v_1=v,v_2,\dots,v_l> | h \cap X = \{v_l\}\}.
\end{align} 


According to Eq.~\ref{eq:compute_inf}, the influence of a seed set $X$ onto a node $v$ has an equivalent computation based on the sample graphs as follows,
\begin{align}
\label{eq:inf_x_v}
\I_{\G}(X,v) = \sum_{G \sim \G}\chi^G(X,v) \Pr[G \sim \G],
\end{align}
where $\chi^G(X,v)$ is an indicator function having value 1 if $v$ is reachable from $X$ by a live-edge path in $G$ and 0 otherwise. If we group up the sample graphs according to the \hsaw{} from nodes in $X$ to $v$ such that $\Omega_{h}$ contains all the sample graphs having the path $h$. Then since the set $X$ is deterministic,
\begin{align}
\label{eq:walk_prob}
{\Pr}_{X\sim \V_I}[h] = \sum_{G\sim \Omega_{h}} \Pr[G \sim \G].
\end{align}

Due to the walk uniqueness property, $\Omega_{h}$ for $h \in \mathcal{P}_{X,v}$ are completely disjoint and their union is equal the set of sample graphs of $\G$ that $v$ is activated from nodes in $X$. Thus Eq.~\ref{eq:inf_x_v} is rewritten as,
\begin{align}
\I_{\G}(X,v) = \sum_{h \in \mathcal{P}_{X,v}} \sum_{G \in \Omega_{h}} \Pr[G \sim \G] = \sum_{h \in \mathcal{P}_{X,v}} {\Pr}_{X \sim \V_I}[h]. \nonumber
\end{align}

We now compute the value of $\I_{\G}(X,v) - \I_{\G'}(X,v)$ in the summation of Eq.~\ref{eq:decompose}. Since $\G'$ is induced from $\G$ by removing the edges in $T_k$, the set of all possible sample graphs of $\G'$ will be a subset of those sample graphs of $\G$. Furthermore, if $G \sim \G$ and $G$ can not be sampled from $\G'$, then $\Pr[G \sim \G'] = 0$ and we have the following,
\begin{align}
\I_{\G}&(X,v) - \I_{\G'}(X,v) = \sum_{h \in \mathcal{P}_{X,v}} \sum_{G \in \Omega_{h}}(\Pr[G \sim \G] - \Pr[G \sim \G']) \nonumber \\
&= \sum_{\substack{h \in \mathcal{P}_{X,v}\\
		h \cap T_k = \emptyset}} \sum_{G \in \Omega_{h}}(\Pr[G \sim \G] - \Pr[G \sim \G'])\nonumber \\
& \qquad \qquad + \sum_{\substack{h \in \mathcal{P}_{X,v}\\
		h \cap T_k \neq \emptyset}} \sum_{G \in \Omega_{h}}(\Pr[G \sim \G] - \Pr[G \sim \G'])\nonumber \\
& = \sum_{\substack{h \in \mathcal{P}_{X,v}\\
		h \cap T_k = \emptyset}} ({\Pr}_{X\sim \V_I}[h] - {\Pr}_{X\sim \V_I}[h]) + \sum_{\substack{h \in \mathcal{P}_{X,v}\\
		h \cap T_k \neq \emptyset}} {\Pr}_{X\sim \V_I}[h] \nonumber \\
& = \sum_{\substack{h \in \mathcal{P}_{X,v}\\
		h \cap T_k \neq \emptyset}} {\Pr}_{X\sim \V_I}[h],
\end{align}
where the second equality is due to the division of $\mathcal{P}_{X,v}$ into two sub-collections of \hsaw{}s. The third and forth equalities are due to Eq.~\ref{eq:walk_prob} when $X$ is deterministic. Thus, we obtain,
\begin{align}
	\label{eq:eq14}
	\I_{\G}(X,v) &- \I_{\G'}(X,v) = \sum_{\substack{h \in \mathcal{P}_{X,v}\\
		h \cap T_k \neq \emptyset}} {\Pr}_{X\sim \V_I}[h],
\end{align}
which is the summation over the probabilities of having a \hsaw{} appearing in $\G$ but not in $\G'$ indicated by $h$ is suspended by $T_k$ given that $X$ is deterministic.

Plugging Eq.~\ref{eq:eq14} back to Eq.~\ref{eq:decompose}, we obtain,
\begin{align}
	\DE(T_k,& \V_I) = \sum_{X \sim \V_I} \sum_{v \in V} [\I_{\G}(X,v) - \I_{\G'}(X,v)] \Pr[X \sim \V_I] \nonumber \\
	&= \sum_{X \sim \V_I} \sum_{v \in V} \sum_{\substack{h \in \mathcal{P}_{X,v}\\
		h \cap T_k \neq \emptyset}} {\Pr}_{X\sim \V_I}[h] \Pr[X \sim \V_I].
\label{eq:eq19}
\end{align}

We define $\mathcal{P}_X$ to be the set of all \hsaw{}s from a node in $X$ to other nodes and $\mathcal{P}$ to be the set of all \hsaw{}s from nodes in the probabilistic set $\V_I$ to other nodes. Then Eq.~\ref{eq:eq19} is rewritten as,
\begin{align}
	& \DE(T_k, \V_I) = \sum_{X \sim \V_I} \sum_{\substack{h \in \mathcal{P}_{X}\\
		h \cap T_k \neq \emptyset}} {\Pr}_{X\sim \V_I}[h] \Pr[X \sim \V_I] \nonumber \\
		\label{eq:final_eq}
	& = \sum_{\substack{h \in \mathcal{P}\\
		h \cap T_k \neq \emptyset}} \Pr[h] = \sum_{
		h \cap T_k \neq \emptyset} \Pr[h] \\
	& = \I_{\G}(\V_I) \frac{ \sum_{
				h \cap T_k \neq \emptyset}\Pr[h]}{\I_{\G}(\V_I)} = \I_{\G}(\V_I) \frac{ \sum_{
				h \cap T_k \neq \emptyset}\Pr[h]}{\sum_{
				h \in \mathcal{P}}\Pr[h]} \\
	& = \I_{\G}(\V_I) \Pr[T \text{ interdicts }h]
\end{align}
The last equality is obtained from $\Pr[T \text{ interdicts }h] = \frac{ \sum_{
		h \cap T_k \neq \emptyset}\Pr[h]}{\sum_{
		h \in \mathcal{P}}\Pr[h]}$ which holds since $h$ is random \hsaw{}.

\textbf{Proof of Monotonicity and Submodularity of $\DE(T_k, \V_I)$.} The left-hand side of Eq.~\ref{eq:final_eq} is equivalent to the \textit{weighted coverage} function of a set cover system in which: every \hsaw{} $h \in \mathcal{P}$ is an element in a universal set $\mathcal{P}$ and edges in $E$ are subsets. The subset of $e \in E$ contains the elements that the corresponding \hsaw{}s have $e$ on their paths. The probability $\Pr[h]$ is the weight of element $h$. Since the weighted coverage function is monotone and submodular, it is followed that $\DE(T_k, \V_I)$ has the same properties.

\section{Proof of Theorem~\ref{theo:app}}
Before proving Theorem~\ref{theo:app}, we need the following results.

Let $R_1,\ldots, R_N$ be the random \hsaw{} samples generated in \EIS{} algorithms. Given a subset of edges $T \subset E$, define $X_j(T) = \min\{|R_j \cap T|,1\}$, the Bernouli random variable with mean $\mu_X = \E[X_j(T)]=\DE(T)/\I(\V_I, p)$. Let $\hat \mu_X = \frac{1}{N}\sum_{i=1}^{N}X_i(T)$ be an estimate of $\mu_X$. Corollaries 1 and 2 in \cite{Tang15} state that,
\begin{Lemma}[\cite{Tang15}]
	\label{lem:chernoff}
	For $N > 0$ and $\epsilon > 0$, it holds that,
	\vspace{-0.05in}
	\begin{align}
	\label{eq:plus}
	\Pr[\hat \mu_X >  (1+ \epsilon) \mu_X]  &\leq
	\exp{(\frac{-N\mu_X\epsilon^2}{2 + \frac{2}{3}\epsilon})},\\
	\label{eq:minus}
	\Pr[\hat \mu_X <  (1- \epsilon) \mu_X] &\leq \exp{(\frac{-N\mu_X\epsilon^2}{2})}.
	\end{align}
\end{Lemma}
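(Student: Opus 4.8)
The plan is to recognize Lemma~\ref{lem:chernoff} as the pair of standard multiplicative Chernoff bounds for a sum of i.i.d.\ Bernoulli variables, and to derive them by the exponential-moment (Chernoff) method. Write $S = \sum_{i=1}^N X_i(T)$, so that $S$ is a sum of $N$ independent Bernoulli$(\mu_X)$ variables with $\E[S] = N\mu_X$ and $\hat\mu_X = S/N$. Both target events are deviation events for $S$: $\{\hat\mu_X > (1+\epsilon)\mu_X\} = \{S > (1+\epsilon)N\mu_X\}$ and $\{\hat\mu_X < (1-\epsilon)\mu_X\} = \{S < (1-\epsilon)N\mu_X\}$, so it suffices to bound the tails of $S$.

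For the upper tail \eqref{eq:plus}, fix $\lambda > 0$ and apply Markov's inequality to $e^{\lambda S}$:
\[
\Pr[S \ge (1+\epsilon)N\mu_X] \le e^{-\lambda(1+\epsilon)N\mu_X}\,\E[e^{\lambda S}] = e^{-\lambda(1+\epsilon)N\mu_X}\prod_{i=1}^{N}\E[e^{\lambda X_i}].
\]
Using $\E[e^{\lambda X_i}] = 1 - \mu_X + \mu_X e^\lambda \le \exp\!\big(\mu_X(e^\lambda-1)\big)$ gives the bound $\exp\!\big(N\mu_X(e^\lambda - 1 - \lambda(1+\epsilon))\big)$; choosing $\lambda = \ln(1+\epsilon)$ yields the classical form $\big(e^\epsilon/(1+\epsilon)^{1+\epsilon}\big)^{N\mu_X}$. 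It then remains to invoke the elementary inequality $(1+\epsilon)\ln(1+\epsilon) - \epsilon \ge \epsilon^2/(2+\tfrac23\epsilon)$, valid for all $\epsilon > 0$, to conclude $\Pr[S \ge (1+\epsilon)N\mu_X] \le \exp\!\big(-N\mu_X\epsilon^2/(2+\tfrac23\epsilon)\big)$.

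For the lower tail \eqref{eq:minus}, I proceed symmetrically with $\lambda < 0$ (equivalently, apply the method to $-S$), obtaining $\Pr[S \le (1-\epsilon)N\mu_X] \le \big(e^{-\epsilon}/(1-\epsilon)^{1-\epsilon}\big)^{N\mu_X}$ for $0 < \epsilon < 1$, and then use $(1-\epsilon)\ln(1-\epsilon) + \epsilon \ge \epsilon^2/2$ to get $\exp(-N\mu_X\epsilon^2/2)$; for $\epsilon \ge 1$ the event is empty and the bound holds trivially. Since these two statements are precisely Corollaries~1 and~2 of \cite{Tang15}, one may alternatively just cite them.

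The only non-mechanical part — and hence the main obstacle — is verifying the two scalar inequalities $(1+\epsilon)\ln(1+\epsilon) - \epsilon \ge \epsilon^2/(2+\tfrac23\epsilon)$ and $(1-\epsilon)\ln(1-\epsilon)+\epsilon \ge \epsilon^2/2$. For the first, set $g(\epsilon) = (1+\epsilon)\ln(1+\epsilon) - \epsilon - 3\epsilon^2/(6+2\epsilon)$, note $g(0)=0$, and show $g'(\epsilon) \ge 0$ for $\epsilon>0$ (equivalently, compare the series $\ln(1+\epsilon) = \epsilon - \epsilon^2/2 + \epsilon^3/3 - \cdots$ against the rational function); the second follows the same pattern from $(1-\epsilon)\ln(1-\epsilon)+\epsilon = \epsilon^2/2 + \epsilon^3/6 + \cdots \ge \epsilon^2/2$. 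Everything else — independence, the product form of the moment generating function, the step $1-\mu+\mu e^\lambda \le e^{\mu(e^\lambda-1)}$, and the choice of $\lambda$ — is routine.
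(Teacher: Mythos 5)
Your proposal is correct, but it is worth noting how it relates to the paper: the paper does not prove Lemma~\ref{lem:chernoff} at all --- it simply imports it as Corollaries~1 and~2 of \cite{Tang15}, where the bounds are established via martingale concentration inequalities so that they remain valid even when the samples are generated in a dependent, adaptive fashion. Your derivation is the more elementary i.i.d.\ route: exponential moment method, $\mathbb{E}[e^{\lambda X_i}] = 1-\mu_X+\mu_X e^{\lambda} \le e^{\mu_X(e^{\lambda}-1)}$, optimal $\lambda=\ln(1+\epsilon)$ (resp.\ $\lambda=\ln(1-\epsilon)$), followed by the standard scalar inequalities $(1+\epsilon)\ln(1+\epsilon)-\epsilon \ge \epsilon^2/(2+\tfrac{2}{3}\epsilon)$ and $(1-\epsilon)\ln(1-\epsilon)+\epsilon \ge \epsilon^2/2$, both of which are true (the first is the usual Bennett-to-Bernstein step, the second follows from the series $\sum_{k\ge 2}\epsilon^k/(k(k-1))$), and you correctly dispose of the vacuous case $\epsilon\ge 1$ in the lower tail. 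This suffices for the present paper because here the variables $X_j(T)=\min\{|R_j\cap T|,1\}$ are genuine i.i.d.\ Bernoulli$(\mu_X)$ variables --- the \hsaw{} samples come from a pregenerated independent stream and each set $\R_t$, $\R'_t$ has a size $\Lambda 2^{t-1}$ fixed in advance, with adaptivity of the stopping rule absorbed by the union bound over $t\le t_{\max}$ in the proof of Theorem~\ref{theo:app}. What the martingale version in \cite{Tang15} buys in exchange for more machinery is robustness to sample dependence, which is needed in their setting but not here; so either citing \cite{Tang15} or giving your direct Chernoff--Bernstein argument is a legitimate way to justify the lemma.
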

The above lemma is used in proving the estimation guarantees of the candidate solution $\hat T_k$ found by \textsf{Greedy} algorithm in each iteration and the optimal solution $T^*_k$.

Recall that \EIS{} stops when either 1) the number of samples exceeds the cap, i.e., $|\R_t| \geq N_{\max}$ or 2) $\epsilon_t \leq \epsilon$ for some $t\geq 1$. In the first case, $N_{\max}$ was chosen to guarantee that 
$\hat T_k$ will be a $(1-1/e-\epsilon)$-approximation solution w.h.p.
\begin{Lemma}
	\label{lem:cap}
	Let $B^{(1)}$ be the bad event that 
	\[
	B^{(1)} = (|\R_t| \geq N_{\max}) \cap (\DE(\hat T_k) < (1-1/e-\epsilon)\emph{\OPT}^{(e)}_k).
	\] We have
	\[
	\Pr[B^{(1)}] \leq  \delta/3.
	\]
\end{Lemma}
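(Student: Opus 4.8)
The plan is to bound the probability that \EIS{} stops due to the guard condition $|\R_t| \geq N_{\max}$ while the returned candidate $\hat T_k$ fails the approximation ratio. First I would note that if the loop exits by the guard, then the set of samples $\R_t$ that was used by \textsf{Greedy} has size exactly $N_{\max}$ (or at least $N_{\max}$, since the last iteration doubled past it); in either case it suffices to prove that with $N \geq N_{\max}$ samples, \textsf{Greedy} returns a $(1-1/e-\epsilon)$-approximation with probability at least $1-\delta/3$. So the core task reduces to a standard ``RIS/coverage with enough samples'' argument, with $N_{\max}$ chosen (Eq.~\ref{eq:guard}) to be the worst-case sample count $\theta$ (Eq.~\ref{eq:theta}) evaluated at the pessimistic bounds $\OPTe \geq k \cdot \I_\G(\V_I)/m$ (each single edge can interdict at most a $1/m$ fraction... actually the right pessimistic bound is $\OPTe \geq$ something giving the $m/k$ factor) and $\hat\I_\G(\V_I) \leq \I_\G(\V_I)$ trivially, so that $N_{\max} \geq \theta$ deterministically.

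Next I would carry out the coverage analysis. By Theorem~\ref{lem:rr_edge}, $\DE(T) = \I_\G(\V_I)\,\mu_X(T)$ where $\mu_X(T) = \Pr[T \text{ interdicts } h_j]$, and by Corollary~\ref{lem:mon_sub} the empirical coverage $\widehat{\mu}_X(\cdot)$ over $N$ samples is a monotone submodular function, so \textsf{Greedy} on the samples returns a set $\hat T_k$ with $\widehat{\mu}_X(\hat T_k) \geq (1-1/e)\max_{|T|=k}\widehat{\mu}_X(T) \geq (1-1/e)\widehat{\mu}_X(T^*_k)$. The plan is then the usual two-sided Chernoff sandwich using Lemma~\ref{lem:chernoff}: (i) a union bound over all $\binom{m}{k}$ candidate sets controls $\Pr[\widehat{\mu}_X(T) \geq (1+\epsilon_a)\mu_X(T) \text{ for some } |T|=k]$ via Eq.~\ref{eq:plus}, forcing any over-covering set (in particular $\hat T_k$) to have true coverage not much below its empirical coverage; (ii) a single application of Eq.~\ref{eq:minus} to $T^*_k$ ensures $\widehat{\mu}_X(T^*_k) \geq (1-\epsilon_b)\mu_X(T^*_k) = (1-\epsilon_b)\OPTe/\I_\G(\V_I)$. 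Chaining these with the greedy guarantee yields $\DE(\hat T_k) \geq (1-1/e-\epsilon)\OPTe$ once the failure probabilities of (i) and (ii) together are at most $\delta/3$; the choice of $N_{\max}$ in Eq.~\ref{eq:guard}, with the $\ln(6/\delta) + \ln\binom{m}{k}$ numerator and the $(2-1/e)^2(2+\tfrac23\epsilon)$ and $\epsilon^{-2}$ factors, is exactly calibrated so that plugging $N = N_{\max}$ and $\mu_X(T^*_k) \geq \OPTe/\I_\G(\V_I)$ into the exponents of Lemma~\ref{lem:chernoff} drives each tail below $\delta/6$. (The split of $\epsilon$ into the pieces absorbed by the $1+\epsilon_a$ and $1-\epsilon_b$ slacks versus the greedy $1-1/e$ loss is the bookkeeping that the $(2-1/e)^2$ constant encodes; I would follow the allocation used in \cite{Tang15,Nguyen163}.)

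The main obstacle, and the place I would spend the most care, is the bound $\OPTe \geq (k/m)\,\I_\G(\V_I)$ (or whatever the correct worst-case lower bound on $\OPTe$ relative to $\I_\G(\V_I)$ is) needed to guarantee $N_{\max} \geq \theta$ deterministically, and correspondingly that the Chernoff exponents are large enough in the \emph{worst} case rather than in expectation. Concretely I must argue that removing the $k$ best edges always suspends at least a $k/m$ fraction of the total influence — this follows because the edges partition $\mathcal P$ into coverage classes and by an averaging/greedy argument the top $k$ edges cover at least a $k/m$ fraction of the weighted universe — and then verify that $N_{\max} = (2-\tfrac1e)^2(2+\tfrac23\epsilon)\,m\cdot\frac{\ln(6/\delta)+\ln\binom mk}{k\epsilon^2} \geq \theta$ for every realization, i.e., that the pessimistic substitution $\OPTe \mapsto k\I_\G(\V_I)/m$ and $\hat\I_\G(\V_I)\mapsto\I_\G(\V_I)$ into Eq.~\ref{eq:theta} yields exactly Eq.~\ref{eq:guard}. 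Once that deterministic domination is in hand, the probabilistic part is routine. I would also double-check that when the final iteration overshoots $N_{\max}$ by a factor $<2$ the analysis still goes through (monotonicity in $N$ of the Chernoff bounds makes ``$\geq N_{\max}$'' as good as ``$=N_{\max}$'').
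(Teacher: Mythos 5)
Your proposal follows essentially the same route as the paper's proof: a two-sided Chernoff argument (Lemma~\ref{lem:chernoff}) with a union bound over the $\binom{m}{k}$ edge sets, chained with the $(1-1/e)$ greedy guarantee to get the approximation with failure probability $\delta/3$ at the sample count $\theta$ of Eq.~\ref{eq:theta}, plus the deterministic domination $N_{\max}\geq\theta$ via the averaging bound $\OPTe \geq \frac{k}{m}\I_\G(\V_I)$ — exactly the paper's steps (S1) and (S2). The only additions are your (correct) remarks on monotonicity in $N$ when the last iteration overshoots $N_{\max}$, which the paper leaves implicit.
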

\begin{proof}
	We prove the lemma in two steps: 
	\begin{itemize}
		\item[(S1)] With $N = (2-\frac{1}{e})^2(2+\frac{2}{3}\epsilon) \I_\G(\V_I) \cdot \frac{\ln (6/\delta)+\ln {m \choose k}}{\OPTe \epsilon^2}$ \hsaw{} samples, the returned solution $\hat T_k$ is an $(1-1/e-\epsilon)$-approximate solution with probability at least $1-\delta/3$.
		\item[(S2)] $N_{\max} \geq N$.
	\end{itemize}

	\textit{Proof of (S1).} Assume an optimal solution $T^*_k$ with maximum influence suspension of $\OPTe{}$. Use $N = (2-\frac{1}{e})^2(2+\frac{2}{3}\epsilon) \I_\G(\V_I) \cdot \frac{\ln (6/\delta)+\ln {m \choose k}}{\OPTe \epsilon^2}$ \hsaw{} samples and apply Lemma~\ref{lem:chernoff} on a set $T_k$ of $k$ edges, we obtain,
	\begin{align}
		& \Pr[\DE_{t}(T_k) \geq \DE(T_k) + \frac{\epsilon}{2-1/e} \OPTe] \\
		& = \Pr[\DE_{t}(T_k) \leq \left (1 + \frac{\epsilon}{2-1/e} \frac{\OPTe}{\DE(T_k)} \right) \DE(T_k)]\\
		& \leq \exp \left( - \frac{N \DE(T_k) }{(2+2/3\epsilon)\I_\G(\V_I)} \left ( \frac{\OPTe (2-1/e)}{\DE(T_k)\epsilon}\right )^2 \right) \\
		& \leq \frac{\delta {m \choose k}}{6}
	\end{align}
	
	Applying union bound over all possible edge sets of size $k$ and since $\hat T_k$ is one of those sets, we have,
	\begin{align}
		\label{eq:eq43}
		\Pr[\DE_{t}(T_k) \geq \DE(T_k) + \frac{\epsilon}{2-1/e} \OPTe] \leq \frac{\delta}{6}
	\end{align}
	
	Similarly, using the same derivation on the optimal solution $T^*_k$ and apply the second inequality in Lemma~\ref{lem:chernoff}, we obtain,
	\begin{align}
		\label{eq:eq44}
		\Pr[\DE_{t}(T^*_k) \leq (1 - \frac{\epsilon}{2-1/e}) \OPTe] \leq \frac{\delta {m \choose k}}{6} 
	\end{align}
	
	Eqs.~\ref{eq:eq43} and~\ref{eq:eq44} give us the bounds on two bad events:
	\begin{itemize}
		\item[(1)] $\DE_{t}(T_k) \geq \DE(T_k) + \frac{\epsilon}{2-1/e} \OPTe$ and,
		\item[(2)] $\DE_{t}(T^*_k) \leq (1 - \frac{\epsilon}{2-1/e}) \OPTe$
	\end{itemize}
	with the maximum probability on either of them happening is $\frac{\delta}{6} + \frac{\delta {m \choose k}}{6} \leq \frac{\delta}{3}$. Thus, in case neither of the two bad events happens, we have both,
	\begin{itemize}
		\item[(1')] $\DE_{t}(T_k) \leq \DE(T_k) + \frac{\epsilon}{2-1/e} \OPTe$ and,
		\item[(2')] $\DE_{t}(T^*_k) \geq (1 - \frac{\epsilon}{2-1/e}) \OPTe$
	\end{itemize}
	with probability at least $1-\frac{\delta}{3}$. Using (1') and (2'), we derive the approximation guarantee of $\hat T_k$ as follows,
	\begin{align}
		\DE_{t}(T_k) & \leq \DE(T_k) + \frac{\epsilon}{2-1/e} \OPTe \nonumber \\
		\Leftrightarrow \text{ } \DE(T_k) & \geq \DE_t(T_k) - \frac{\epsilon}{2-1/e} \OPTe \nonumber \\
		& \geq (1-1/e)\DE_t(T^*_k) - \frac{\epsilon}{2-1/e} \OPTe \nonumber \\
		& \geq (1-1/e) (1-\frac{\epsilon}{2-1/e})\OPTe - \frac{\epsilon}{2-1/e} \OPTe \nonumber \\
		& \geq (1-1/e - ((1-1/e)\frac{\epsilon}{2-1/e} + \frac{\epsilon}{2-1/e}))\OPTe \nonumber \\
		& \geq (1-1/e-\epsilon)\OPTe
	\end{align}
	Thus, we achieve $\DE(T_k) \geq (1-1/e-\epsilon)\OPTe$ with probability at least $1-\frac{\delta}{3}$
	
	\textit{Proof of (S2).} It is sufficient to prove that $\frac{k}{m} \leq \frac{\OPTe}{\I(\V_I, p)}$ which is trivial since it equivalent to $\OPTe \geq \frac{k}{m} \I(\V_I, p)$ and the optimal solution $\OPTe$ with $k$ edges must cover at least a fraction $\frac{k}{m}$ the total influence of $\I_\G(\V_I)$. Note that there are $m$ edges to select from and the influence suspension of all $m$ edges is exactly $\I_\G(\V_I)$.
\end{proof}

In the second case, the algorithm stops when $\epsilon_t \leq \epsilon$ for some
$1\leq t \leq t_{\max}$. The maximum number of iterations $t_{\max}$ is bounded by $O(\log_2 n)$ as stated below.
\begin{Lemma}
	\label{lem:tmax}
	The number of iterations in \emph{\EIS{}} is at most $t_{\max} = O(\log n)$.
\end{Lemma}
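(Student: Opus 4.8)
The plan is to read the bound on the number of iterations directly off the guarding condition of Algorithm~\ref{alg:eis}. In iteration $t$ the \textbf{Repeat}--\textbf{until} loop forms $\R_t$ from the first $\Lambda 2^{t-1}$ samples and, after running \textsf{Greedy} and \textsf{Check}, terminates whenever $|\R_t| = \Lambda 2^{t-1} \ge N_{\max}$ (it may of course terminate earlier if \textsf{Check} succeeds). Hence the index of the final iteration never exceeds the smallest $t$ with $\Lambda 2^{t-1} \ge N_{\max}$, that is, $t \le 1 + \log_2(N_{\max}/\Lambda) = \log_2(2N_{\max}/\Lambda)$, which is exactly the quantity $t_{\max}$ defined after Eq.~\ref{eq:lambda_1}. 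So the whole lemma reduces to showing $\log_2(2N_{\max}/\Lambda) = O(\log n)$, i.e., that $N_{\max}/\Lambda$ is polynomially bounded in $n$.

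First I would dispose of the one delicate point: $t_{\max}$ also occurs inside $\Lambda$ through the term $\ln(3t_{\max}/\delta)$, so the defining identity for $t_{\max}$ is implicit. This is harmless because we only need an \emph{upper} bound on the iteration count, and the appearance of $t_{\max}$ is under a double logarithm that can only help; concretely, since $t_{\max}\ge 1$ and $\delta<1$ we have $\ln(3t_{\max}/\delta) \ge \ln 3 > 1$, so by Eq.~\ref{eq:lambda} it suffices to use the crude lower bound $\Lambda \ge (2+\tfrac{2}{3}\epsilon)/\epsilon^2$. (If one prefers, the implicit equation is easily seen to have a unique solution by a fixed-point argument, but the crude bound already suffices for the $O(\log n)$ claim.)

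With this in hand the rest is routine. Plugging the closed form of $N_{\max}$ from Eq.~\ref{eq:guard} together with $\Lambda \ge (2+\tfrac23\epsilon)/\epsilon^2$ gives
\[
\frac{N_{\max}}{\Lambda} \;\le\; \Bigl(2-\tfrac1e\Bigr)^{2}\,\frac{m}{k}\,\bigl(\ln(6/\delta)+\ln\tbinom{m}{k}\bigr).
\]
Using $\ln\binom{m}{k} \le k\ln(em/k) \le k\ln(em)$ one gets $\tfrac{m}{k}\ln\binom{m}{k} \le m\ln(em)$ and $\tfrac{m}{k}\ln(6/\delta) \le m\ln(6/\delta)$, so $N_{\max}/\Lambda = O\!\bigl(m(\ln m + \ln(1/\delta))\bigr)$. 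Since $m \le n^2$ and, by the standing assumption, $1/\delta = O(n)$, this is $O(n^2 \log n)$, whence $t_{\max} = \log_2(2N_{\max}/\Lambda) = O(\log n)$, as claimed. The only step that deserves any care is the self-reference of $t_{\max}$ inside $\Lambda$, which I handle simply by noting that the offending term $\ln(3t_{\max}/\delta)$ is bounded below by an absolute constant and therefore can only strengthen the upper bound; everything else is elementary estimation.
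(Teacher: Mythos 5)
Your proof is correct and follows essentially the same route as the paper: bound the iteration count by $\log_2(2N_{\max}/\Lambda)$ from the doubling schedule and guard condition, substitute Eq.~\ref{eq:guard}, bound $\ln\binom{m}{k}\le k\ln(em)$, and invoke $k\le m\le n^2$ together with $1/\delta=O(n)$. Your explicit treatment of the self-reference of $t_{\max}$ inside $\Lambda$ (via the crude lower bound $\Lambda\ge(2+\tfrac{2}{3}\epsilon)/\epsilon^2$) is a small refinement over the paper, which simply replaces $\ln(3t_{\max}/\delta)$ by $\ln(3/\delta)$ using $t_{\max}\ge 1$, but the argument is otherwise the same.
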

\begin{proof}
	Since the number of \hsaw{} samples doubles in every iteration and we start at $\Lambda$ and stop with at most $2 N_{\max}$ samples, the maximum number of iterations is,
	\begin{align}
	t_{\max} & = \log_2 (\frac{2 N_{\max}}{\Upsilon(\epsilon, \delta/3)}) = \log_2 \left (2(2-\frac{1}{e})^2\frac{(2+\frac{2}{3}\epsilon)m \cdot \frac{\ln (6/\delta)+\ln {m\choose k}}{k\epsilon^2}}{(2+\frac{2}{3}\epsilon)\ln (\frac{3}{\delta}) \frac{1}{\epsilon^2}} \right) \nonumber \\
	& = \log_2 \left(2(2-\frac{1}{e})^2 \frac{m (\ln (6/\delta)+ \ln{m \choose k})}{k \ln (3/\delta)} \right) \nonumber \\
	& \leq \log_2 \left(2 (2-\frac{1}{e})^2\frac{m (\ln (6/\delta)+ k \ln m)}{k \ln (3/\delta)} \right) \nonumber \\
	& \leq \log_2 \left(2 (2-\frac{1}{e})^2 \frac{m}{k} + 2(2-\frac{1}{e})^2 m \frac{\ln m}{\ln (3/\delta)} + 2(2-\frac{1}{e})^2 \frac{m \ln 2}{k \ln (3/\delta)} \right) \nonumber \\
	& = O(\log_2 n)
	\end{align}
	The last equality is due to that $k \leq m \leq n^2$; $\textsf{EPT}_k$ is constant and our precision parameter $1/\delta = \Omega(n)$.
\end{proof}

For each iteration $t$, we will bound the probabilities of the bad events that lead to inaccurate estimations of  $\DE(\hat T_k)$ through $\R'_t$, and $\DE(T^*_k)$ through $\R_t$(Line~5 in Alg.~\ref{alg:check}). We obtain the following.

\begin{Lemma}
	\label{lem:bad2}
	For each $1\leq t \leq t_{\max}$, let 
	\[
	\hat \epsilon_t \text{ be the unique root of } f(x)=\frac{\delta}{3t_{\max}},
	\]
	where $f(x)=\exp{\left(-\frac{N_t \frac{\DE(\hat T_k)}{\I_\G(\V_I)} x^2 }{2+2/3x}\right)}$, and 
	\[
	\epsilon_t^*=
	\epsilon \sqrt{\frac{\I_\G(\V_I)}{(1+\epsilon/3)2^{t-1} \emph{\textsf{OPT}}^{(e)}_k}}.
	\]
	Consider the following bad events
	\begin{align*}
		B_t^{(2)} &= \left(
		\DE_{t'}(\hat T_k) \geq (1+\hat \epsilon_t)\DE(\hat T_k) \right),
		\\
		B_t^{(3)} &= \left(
		\DE_{t}(T^*_k) \leq (1-\epsilon_t^*) \emph{\textsf{OPT}}^{(e)}_k
		\right).
	\end{align*}
	We have 
	\[
	\Pr[B_t^{(2)}],  \Pr[B_t^{(3)}] \leq \frac{\delta}{3t_{\max}}.
	\]
\end{Lemma}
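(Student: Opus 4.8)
The plan is to handle the two bad events by two separate applications of the Chernoff-type bounds of Lemma~\ref{lem:chernoff}, using throughout that within iteration $t$ the sample sets $\R_t$ and $\R'_t$ are drawn from disjoint blocks of the i.i.d.\ \hsaw{} stream, hence independent, and that $|\R_t|=|\R'_t|=N_t:=\Lambda 2^{t-1}$.

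For $B_t^{(2)}$ the subtlety is that $\hat T_k=\textsf{Greedy}(\R_t,C,k)$ \emph{and} the threshold $\hat\epsilon_t$ (which depends on the true value $\DE(\hat T_k)$) are both functions of $\R_t$ only, while the estimate $\DE_{t'}(\hat T_k)$ is formed over the independent block $\R'_t$. First I would condition on $\R_t$: then $\hat T_k$, $\DE(\hat T_k)$ and $\hat\epsilon_t$ are fixed, the variables $X_j(\hat T_k)=\min\{|R_j\cap\hat T_k|,1\}$ over $R_j\in\R'_t$ are i.i.d.\ Bernoulli with mean $\mu_X=\DE(\hat T_k)/\I_\G(\V_I)$, and $\DE_{t'}(\hat T_k)/\I_\G(\V_I)=\hat\mu_X$. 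Applying the upper tail Eq.~\ref{eq:plus} with $N=N_t$ and $\epsilon=\hat\epsilon_t$ gives, conditionally,
\[
\Pr\!\big[\DE_{t'}(\hat T_k)\ge(1+\hat\epsilon_t)\DE(\hat T_k)\big]\le \exp\!\Big(-\tfrac{N_t\,\mu_X\,\hat\epsilon_t^{\,2}}{2+\tfrac{2}{3}\hat\epsilon_t}\Big)=f(\hat\epsilon_t)=\tfrac{\delta}{3t_{\max}},
\]
where the last equality is exactly the defining equation of $\hat\epsilon_t$; this is legitimate because $f$ is continuous and strictly decreasing from $1$ to $0$ on $(0,\infty)$ whenever $\DE(\hat T_k)>0$, and since $\R_t\neq\emptyset$ the greedy solution covers at least one \hsaw{}, so $\DE(\hat T_k)>0$ almost surely. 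Removing the conditioning by taking expectation over $\R_t$ yields $\Pr[B_t^{(2)}]\le\delta/(3t_{\max})$.

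For $B_t^{(3)}$ no conditioning is needed, since the optimal set $T^*_k$ is fixed. Here $\DE_t(T^*_k)/\I_\G(\V_I)=\hat\mu_Y$ is the empirical mean over $\R_t$ of i.i.d.\ Bernoullis with mean $\mu_Y=\OPTe/\I_\G(\V_I)$, so the lower tail Eq.~\ref{eq:minus} with $\epsilon=\epsilon_t^*$ gives $\Pr[B_t^{(3)}]\le\exp(-N_t\mu_Y(\epsilon_t^*)^2/2)$. It then remains the routine check that the exponent equals $\ln(3t_{\max}/\delta)$: substituting $(\epsilon_t^*)^2=\epsilon^2\I_\G(\V_I)/((1+\epsilon/3)2^{t-1}\OPTe)$, $N_t=\Lambda 2^{t-1}$ and $\mu_Y=\OPTe/\I_\G(\V_I)$ collapses it to $\Lambda\epsilon^2/(2(1+\epsilon/3))$, and then plugging $\Lambda=(2+\tfrac{2}{3}\epsilon)\ln(3t_{\max}/\delta)/\epsilon^2$ together with the identity $2+\tfrac{2}{3}\epsilon=2(1+\epsilon/3)$ makes it exactly $\ln(3t_{\max}/\delta)$; hence $\Pr[B_t^{(3)}]\le\delta/(3t_{\max})$, as claimed.

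The only genuinely delicate part will be the measurability/independence bookkeeping for $B_t^{(2)}$: because both the solution $\hat T_k$ and the threshold $\hat\epsilon_t$ are data-dependent, one cannot apply a concentration inequality to $\DE_{t'}(\hat T_k)$ directly, but must freeze $\R_t$ first and exploit that $\R'_t$ is an independent fresh block; one also has to dispose of the degenerate case $\DE(\hat T_k)=0$ in which $f\equiv 1$ and the root $\hat\epsilon_t$ does not exist (ruled out as above). Everything else is a direct invocation of Lemma~\ref{lem:chernoff} and algebra with the constants defining $\Lambda$.
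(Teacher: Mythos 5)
Your proposal is correct and follows essentially the same route as the paper's proof: both events are handled by the two tails of Lemma~\ref{lem:chernoff}, with $B_t^{(2)}$ controlled by exploiting the independence of $\R'_t$ from the $\R_t$-measurable pair $(\hat T_k,\hat\epsilon_t)$ and the defining equation $f(\hat\epsilon_t)=\delta/(3t_{\max})$, and $B_t^{(3)}$ by the fixed set $T^*_k$ plus the algebra with $\Lambda$ and $2+\tfrac{2}{3}\epsilon=2(1+\epsilon/3)$. Your explicit conditioning on $\R_t$ and the remark ruling out the degenerate case $\DE(\hat T_k)=0$ merely make rigorous what the paper states informally ("$\hat\epsilon_t$ and the samples in $\R'_t$ are independent"), so no substantive difference remains.
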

\begin{proof}
	One can verify that $f(x)$ is a strictly decreasing function for $x>0$. Moreover, $f(0)=1$ and $\lim_{x\rightarrow \infty} f(x) = 0$. Thus, the equation $f(x)=\frac{\delta}{3t_{\max}}$ has an {\em unique solution} for $0<\delta<1$ and $t_{\max}\geq 1$.
	
	{\em Bound the probability of $B^{(2)}_t$}: 
	Note that $\hat \epsilon_t$ and the samples generated in $\R'_t$ are independent. Thus, we can apply the concentration inequality in Eq.~(\ref{eq:plus}):
	\begin{align*}
	\Pr[\DE_{t'}(\hat T_k) \geq (1 + \hat \epsilon_t)\DE(\hat T_k)]
	& \leq \exp\left({-\frac{N_t \DE(\hat T_k) {\hat \epsilon_t}^2}{(2+\frac{2}{3}\hat \epsilon_t) \I_\G(\V_I)}}\right) \\ 
	& \leq \frac{\delta}{3 t_{\max}}.
	\end{align*}
	The last equation is due to the definition of $\hat \epsilon_t$.	
	
	{\em Bound the probability of $B^{(3)}_t$}: Since $\epsilon^*_t$ is fixed and independent from the generated samples, we have
	\begin{align}
	\Pr[& \DE_t(T^*_k) \leq (1-\epsilon^*_t) \OPTe]  \leq \exp\left({- \frac{|\R_t| \OPTe {\epsilon^*_t}^2}{2 \I_\G(\V_I)}}\right) \nonumber \\
	& = \exp\left({-\frac{\Lambda 2^{t-1} \OPTe \epsilon^2 \I_\G(\V_I) }{2 \I_\G(\V_I) 2^{t-1} \OPTe}}\right) \\
	&= 		\exp\left({- \frac{(2+\frac{2}{3}\epsilon) \ln (\frac{3 t_{\max}}{\delta})\frac{1}{\epsilon^2} 2^{t-1} \OPTe \epsilon^2 \I_\G(\V_I)}{ 2(1+\epsilon/3)\I_\G(\V_I) 2^{t-1} \OPTe}}\right) \nonumber \\
	& \leq \exp\left({-\ln \frac{3 t_{\max}}{\delta}}\right) = \frac{\delta}{3 t_{\max}},
	\end{align}
	which completes the proof of Lemma~\ref{lem:bad2}.
\end{proof}

\begin{Lemma}
	\label{lem:e2e}
	Assume that none of the bad events $B^{(1)}$, $B^{(2)}_t$, $B^{(3)}_t$  ($t =1..t_{\max}$) happen and \emph{\EIS{}} stops with some $\epsilon_t \leq \epsilon$. We have 
	\begin{align}
	&\hat \epsilon_t < \epsilon \text{ and consequently }\\
	&\DE_{t'}(\hat T_k) \leq (1+\epsilon)\DE(\hat T_k)
	\end{align}
\end{Lemma}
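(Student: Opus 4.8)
The plan is to argue by contradiction and collapse the statement to the guard condition of the \textsf{Check} procedure (Alg.~\ref{alg:check}). First I would unpack what the hypothesis ``\emph{\EIS{}} stops with some $\epsilon_t\le\epsilon$'' actually provides: $\epsilon_t$ is only ever formed inside the conditional branch of Alg.~\ref{alg:check}, so that branch was entered at iteration $t$, which gives $\Cov_{\R'_t}(\hat T_k)\ge\Lambda_1$; and by Eqs.~\ref{eq:lambda_1} and \ref{eq:lambda} one has $\Lambda_1=1+(1+\epsilon)\Lambda$. Throughout I would set $N_t=|\R'_t|=\Lambda 2^{t-1}$ and $b=\ln(3t_{\max}/\delta)$; as used already in the proof of Lemma~\ref{lem:bad2}, the verification estimator obeys $\DE_{t'}(\hat T_k)=\I_\G(\V_I)\,\Cov_{\R'_t}(\hat T_k)/N_t$ (matching Theorem~\ref{lem:rr_edge}), and $\hat\epsilon_t$ is the unique positive root of $f(x)=\delta/(3t_{\max})$.

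Next I would extract the defining identity of $\hat\epsilon_t$: taking $-\ln$ of both sides of $f(\hat\epsilon_t)=\delta/(3t_{\max})$ yields
\[
N_t\,\DE(\hat T_k)\,\hat\epsilon_t^2=\Bigl(2+\tfrac23\hat\epsilon_t\Bigr)\,\I_\G(\V_I)\,b .
\]
The one genuine idea is to eliminate the unknown true value $\DE(\hat T_k)$ from this identity using the assumption that $B^{(2)}_t$ does not occur: that event failing means $\DE_{t'}(\hat T_k)<(1+\hat\epsilon_t)\DE(\hat T_k)$, hence $\DE(\hat T_k)>\I_\G(\V_I)\Cov_{\R'_t}(\hat T_k)/\bigl((1+\hat\epsilon_t)N_t\bigr)$. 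Substituting this lower bound into the displayed identity and cancelling $N_t$ and $\I_\G(\V_I)$ turns everything into a purely observable inequality,
\[
\Cov_{\R'_t}(\hat T_k)<\frac{(2+\tfrac23\hat\epsilon_t)(1+\hat\epsilon_t)}{\hat\epsilon_t^2}\,b .
\]

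Then I would close with an elementary monotonicity fact: $\psi(x)=\dfrac{(2+\tfrac23 x)(1+x)}{x^2}=\dfrac{2}{x^2}+\dfrac{8}{3x}+\dfrac23$ is strictly decreasing on $(0,\infty)$. So if, for contradiction, $\hat\epsilon_t\ge\epsilon$, the last inequality forces $\Cov_{\R'_t}(\hat T_k)<\psi(\epsilon)\,b=(1+\epsilon)\dfrac{(2+\tfrac23\epsilon)b}{\epsilon^2}=(1+\epsilon)\Lambda<\Lambda_1$, contradicting $\Cov_{\R'_t}(\hat T_k)\ge\Lambda_1$. Hence $\hat\epsilon_t<\epsilon$, and the ``consequently'' clause is then immediate by invoking $\neg B^{(2)}_t$ once more: $\DE_{t'}(\hat T_k)<(1+\hat\epsilon_t)\DE(\hat T_k)\le(1+\epsilon)\DE(\hat T_k)$.

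I expect the only subtle step to be the substitution that converts $\hat\epsilon_t$ --- an a~posteriori quantity defined through the unknown $\DE(\hat T_k)$ --- into the observable coverage bound, which is exactly where $\neg B^{(2)}_t$ is used; after that the argument is just bookkeeping with the definitions of $\Lambda$, $\Lambda_1$, $N_t$ and the monotonicity of $\psi$. A minor point worth spelling out is that termination ``with $\epsilon_t\le\epsilon$'' genuinely forces the guard $\Cov_{\R'_t}(\hat T_k)\ge\Lambda_1$, rather than termination by the cap $|\R_t|\ge N_{\max}$: this holds because Alg.~\ref{alg:check} returns \textsf{False} without ever defining $\epsilon_t$ when the guard fails. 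Notably the proof uses only $\neg B^{(2)}_t$ among the three bad events.
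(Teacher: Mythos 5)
Your proof is correct and follows essentially the same route as the paper's: both use only $\neg B^{(2)}_t$, the guard $\Cov_{\R'_t}(\hat T_k)\geq\Lambda_1$ forced by reaching $\epsilon_t$, the defining identity of $\hat\epsilon_t$, and the monotonicity of $(1+x)(2+\tfrac{2}{3}x)/x^2$ to conclude $\hat\epsilon_t<\epsilon$, then invoke $\neg B^{(2)}_t$ once more for the final bound. Recasting the chain as a contradiction on the observable coverage (and dropping the harmless $+1$ in $\Lambda_1$) is only a cosmetic rearrangement of the paper's argument.
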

\begin{proof}
	Since the bad event $B^{(2)}_t$ does not happen,
	\begin{align}
	\DE_{t'}(\hat T_k) & \leq (1+\hat \epsilon_t)\DE(\hat T_k) \\
	\Leftrightarrow \Cov_{\R'_I}(\hat T_k) & \leq (1+\hat \epsilon_t)N_t\frac{\DE(\hat T_k)}{\I_\G(\V_I)}
	\end{align}
	
	When \EIS{} stops with $\epsilon_t \leq \epsilon$, it must satisfy the condition on Line~2 of Alg.~\ref{alg:check},
	\[
	\Cov_{\R'_I}(\hat T_k) \geq \Lambda_1.
	\] 
	Thus, we have
	\begin{align}
	\label{eq:Nt}(1+\hat \epsilon_t)N_t\frac{\DE(\hat T_k)}{\I_\G(\V_I)} \geq \Lambda_1=1+(1+\epsilon)\frac{2+2/3\epsilon}{\epsilon^2} \ln \frac{3t_{\max}}{\delta}
	\end{align}
	From the definition of $\hat \epsilon_t$, it follows that
	\begin{align}
	\label{eq:nt}
	N_t = \frac{2+2/3 \hat \epsilon_t}{\hat\epsilon_t^2} \ln \left(\frac{3t_{\max}}{\delta}\right)\frac{\I_\G(\V_I)}{\DE(\hat T_k)}
	\end{align}
	Substitute the above into (\ref{eq:Nt}) and simplify, we obtain:
	\begin{align}
	&(1+\hat \epsilon_t)\frac{2+2/3 \hat \epsilon_t}{\hat\epsilon_t^2} \ln \left(\frac{3t_{\max}}{\delta}\right)\\
	\geq & (1+\epsilon)\frac{2+2/3\epsilon}{\epsilon^2} \ln \frac{3t_{\max}}{\delta}+1
	\end{align}
	Since the function $(1+x)\frac{2+2/3x}{x^2}$ is a decreasing function for $x>0$, it follows that $\hat \epsilon_t < \epsilon$.
\end{proof}

We now prove the approximation guarantee of \EIS{}.

\begin{proof}[Proof of Theorem~\ref{theo:app}]
	Apply union bound for the bad events in Lemmas \ref{lem:cap} and \ref{lem:bad2}. The probability that at least one of the bad events $B^{(1)}, B^{(2)}_t, B^{(3)}_t$  ($t =1..t_{\max}$) happen is at most
	\begin{align}
	\delta/3 + \left(\delta/(3t_{\max}) + \delta/(3t_{\max})\right) \times t_{\max} \leq \delta
	\end{align}
	
	In other words, the probability that none of the bad events happen will be at least $1-\delta$. Assume that none of the bad events happen, we shall show that the returned $\hat T_k$ is a $(1-1/e-\epsilon)$-approximation solution.

	If \EIS{} stops with $|\R_t| \geq N_{\max}$, $\hat T_k$ is a $(1-1/e-\epsilon)$-approximation solution, since the bad event $B^{(1)}$ does not happen.
	
	Otherwise, \EIS{} stops at some iteration $t$ and $\epsilon_t \leq \epsilon$. We use contradiction method.  
	Assume that 
	\begin{align}
	\label{eq:contrad}
	\DE(\hat T_k) < (1-1/e-\epsilon) \OPTe.
	\end{align}
	The proof will continue in the following order
	\begin{enumerate}[label=(\Alph*)]
		\item  $\DE(\hat T_k) \geq (1-1/e-\epsilon_t') \OPTe$\\ where $\epsilon'_t = (\epsilon_1 + \hat\epsilon_t + \epsilon_1 \hat\epsilon_t)(1-1/e-\epsilon) + (1-1/e)\epsilon^*_t$.
		\item  $\hat \epsilon_t \leq \epsilon_2$ and  $\epsilon^*_t \leq \epsilon_3$.
		\item  $\epsilon_t' \leq \epsilon_t \leq \epsilon \Rightarrow \DE(\hat T_k) \geq (1-\frac{1}{e}-\epsilon) \OPTe$ (\emph{contradiction}).
	\end{enumerate}

	\emph{Proof of (A)}. Since the bad events $B^{(2)}_t$ and $B^{(3)}_t$ do not happen, we have
	\begin{align}
	\label{eq:ici} \DE_{t'}(\hat T_k) &\leq (1+\hat \epsilon_t)\DE(\hat T_k), \text{and}\\
	\label{eq:iho} \DE_t(T^*_k) &\leq (1-\epsilon_t^*) \OPTe.
	\end{align}
	Since $\epsilon_1 \leftarrow \Cov_{\R_t} (\hat T_k)/\Cov_{\R'_t}(\hat T_k) - 1  = \DE_t(\hat T_k)/\DE_{t'}(\hat T_k) - 1$, it follows from (\ref{eq:ici}) that
	\begin{align}
	\DE_t(\hat T_k) =  (1+\epsilon_1)\DE_{t'}(\hat T_k) \leq (1+\epsilon_1)(1 + \hat\epsilon_t) \DE(\hat T_k) \nonumber
	\end{align}	
	Expand the right hand side and apply (\ref{eq:contrad}), we obtain	
	\begin{align}
	\DE(\hat T_k) &\geq \DE_t(\hat T_k) - (\epsilon_1 + \hat\epsilon_t + \epsilon_1 \hat\epsilon_t) \DE(\hat T_k) \nonumber \\
	& \geq \DE_t(\hat T_k) - (\epsilon_1 + \hat\epsilon_t + \epsilon_1 \hat\epsilon_t) (1-1/e-\epsilon)\OPTe \nonumber
	\end{align}
	Since the \textsf{Greedy} algorithm for the \textsf{Max-Coverage} guarantees an $(1-1/e)$-approximation, $\DE_t(\hat T_k) \geq (1-1/e) \DE_t(T^*_k)$. Thus,
	\begin{align}		
	\DE(\hat T_k) & \geq (1-1/e)\DE_t(T^*_k) \nonumber \\
	& \qquad \qquad - (\epsilon_1 + \hat\epsilon_t + \epsilon_1 \hat\epsilon_t)(1-1/e-\epsilon) \OPTe \nonumber \\
	&\geq (1-1/e)(1-\epsilon^*_t) \OPTe \nonumber \\
	& \qquad \qquad - (\epsilon_1 + \hat\epsilon_t + \epsilon_1 \hat\epsilon_t)(1-1/e-\epsilon) \OPTe  \nonumber \\
	&\geq (1-1/e - \epsilon'_t) \OPTe, \nonumber
	\end{align}
	where $\epsilon'_t = (\epsilon_1 + \hat\epsilon_t + \epsilon_1 \hat\epsilon_t)(1-1/e-\epsilon) + (1-1/e)\epsilon^*_t$.
	
	\emph{Proof of (B)}. We show that $\hat \epsilon_t \leq \epsilon_2$. Due to the computation of $\epsilon_2 \leftarrow \epsilon \sqrt{\frac{|\R_t|(1+\epsilon)}{2^{t-1} \Cov_{\R'_t} (\hat T_k)}}$, we have
	\[
	\frac{1}{\epsilon^2} = \frac{1}{\epsilon_2^2} \frac{|\R'_{t}|}{2^{t-1}} \frac{1+\epsilon}{\Cov_{\R'_{t}}(\hat T_k)} = \frac{1}{\epsilon_2^2} \frac{\I_\G(\V_I)}{2^{t-1}} \frac{1+\epsilon}{\DE_{t'}(\hat T_k)} .
	\]
	Expand the number of \hsaw{} samples in iteration $t$, $N_t = 2^{t-1} \Lambda$, and apply the above equality, we have 
	\begin{align}
	N_t &=  2^{t-1} (2+2/3\epsilon)\frac{1}{\epsilon^2} \ln \frac{3t_{\max}}{\delta} \\
	&=  2^{t-1} (2+2/3\epsilon)\frac{1}{\epsilon_2^2}  \frac{\I_\G(\V_I)}{2^{t-1}} \frac{1+\epsilon}{\DE_{t'}(\hat T_k)}  \ln \frac{3t_{\max}}{\delta} \\
	&= (2+2/3\epsilon)\frac{1}{\epsilon_2^2}   \frac{(1+\epsilon)\I_\G(\V_I)}{\DE_{t'}(\hat T_k)}  \ln \frac{3t_{\max}}{\delta}
	\end{align}
	On the other hand, according to Eq.~(\ref{eq:nt}), we also have,
	\begin{align}
		N_t = \frac{2+2/3 \hat \epsilon_t}{\hat\epsilon_t^2} \ln \left(\frac{3t_{\max}}{\delta}\right)\frac{\I_\G(\V_I)}{\DE(\hat T_k)}.
	\end{align}
	Thus
	\begin{align*}
	& (2+2/3\epsilon)\frac{1}{\epsilon_2^2}   \frac{1+\epsilon}{\DE_{t'}(\hat T_k)} = \frac{2+2/3 \hat \epsilon_t}{\hat\epsilon_t^2} \frac{1}{\DE(\hat T_k)}\\
	\Rightarrow &\frac{\hat \epsilon_t^2}{\epsilon_2^2} = \frac{2+2/3 \hat \epsilon_t}{2+2/3 \epsilon}  \frac{\DE_{t'}(\hat T_k)}{(1+\epsilon)\DE(\hat T_k)} \leq 1
	\end{align*}
	The last step is due to Lemma \ref{lem:e2e}, i.e., $\DE_{t'}(\hat T_k) \leq (1+\epsilon)\DE(\hat T_k)$ and $\hat \epsilon_t \leq \epsilon$. Therefore, $\hat\epsilon_t \leq \epsilon_2$.
	
	We show that $\epsilon^*_t \leq \epsilon_3$. According to the definition of $\epsilon^*_t$ and $\epsilon_3$, we have
	\begin{align*}
		\frac{(\epsilon^*_t)^2}{\epsilon_3^2} &=\frac{\I_\G(\V_I)}{(1+\epsilon/3)2^{t-1} \OPTe} / 
		\frac{\I_\G(\V_I)(1+\epsilon)(1-1/e-\epsilon)}{(1+\epsilon/3)2^{t-1} \DE_{t'} (\hat T_k)}\\
		=& \frac{\DE_{t'} (\hat T_k)}{\OPTe (1+\epsilon)(1-1/e-\epsilon)}
		\leq \frac{\DE_t (\hat T_k)}{\OPTe (1-1/e-\epsilon)}\leq 1
	\end{align*}
	The last two steps follow from Lem. \ref{lem:e2e}, $\DE_{t'}(\hat T_k) \leq (1+\epsilon)\DE(\hat T_k)$ and the assumption 
	(\ref{eq:contrad}), respectively. Thus, $\epsilon_t^* \leq \epsilon_3$.
	
	\emph{Proof of (C)}. Since $1+\epsilon_1 = \hat \DE_t(\hat T_k)/\DE_{t'}(\hat T_k) \geq 0$ and $\epsilon_2 \geq \hat \epsilon_t >0$ and 
	$\epsilon_3 \geq \epsilon^*_t >0$, we have
	\begin{align}
	\epsilon_t' &= (\epsilon_1 + \hat\epsilon_t + \epsilon_1 \hat\epsilon_t)(1-1/e-\epsilon) + (1-1/e)\epsilon^*_t \\
	&= (\epsilon_1 + \hat\epsilon_t (1+\epsilon_1))(1-1/e-\epsilon) + (1-1/e)\epsilon^*_t \\
	&\leq (\epsilon_1 + \epsilon_2 (1+\epsilon_1))(1-1/e-\epsilon) + (1-1/e)\epsilon_3  \\
	&= \epsilon_t \leq \epsilon.
	\end{align}
	This completes the proof.
\end{proof}

\subsection{Node-based Interdiction Algorithms}

\begin{algorithm} \small
	\caption{\textsf{GreedyNode} algorithm for maximum coverage}
	\label{alg:nmax-cover}
	\KwIn{A set $\R_t$ of \hsaw{} samples, $C \subseteq V$ and $k$.}
	\KwOut{An $(1 - 1/e)$-optimal solution $\hat S_k$ on samples.}
	$\hat S_k = \emptyset$;\\
	\For{$i = 1 \emph{\textbf{ to }} k$}{
		$\hat v \leftarrow \arg \max_{v \in C \backslash \hat S_k}(\Cov_{\R_t}(\hat S_k\cup \{e\}) - \Cov_{\R_t}(\hat S_k))$;\\
		Add $\hat v$ to $\hat S_k$;\\
	}
	\textbf{return} $\hat S_k$;
\end{algorithm}
\begin{algorithm} \small
	\caption{\textsf{CheckNode} algorithm for confidence level}
	\label{alg:ncheck}
	\KwIn{$\hat S_k, \R_t, \R'_t, \epsilon, \delta$ and $t$.}
	\KwOut{\textsf{True} if the solution $\hat S_k$ meets the requirement.}
	Compute $\Lambda_1$ by Eq.~\ref{eq:lambda_1};\\
	\If{$\Cov_{\R'_t}(\hat S_k) \geq \Lambda_1$}{
		$\epsilon_1 = \Cov_{\R_t}(\hat S_k)/\Cov_{\R'_t}(\hat S_k) - 1$; \\
		$\epsilon_2 = \epsilon \sqrt{\frac{|\R'_t|(1+\epsilon)}{2^{t-1} \Cov_{\R'_t}(\hat S_k)}}$; $\epsilon_3 = \epsilon \sqrt{\frac{|\R'_t|(1+\epsilon) (1-1/e-\epsilon)}{(1+\epsilon/3)2^{t-1} \Cov_{\R'_t}(\hat S_k)}}$; \\
		$\epsilon_t = (\epsilon_1 + \epsilon_2 + \epsilon_1 \epsilon_2)(1-1/e-\epsilon) + (1-1/e)\epsilon_3$; \\
		\If{$\epsilon_t \leq \epsilon$}{\textbf{return} \textsf{True};}
	}
	\textbf{return} \textsf{False};
\end{algorithm}
\begin{algorithm} \small
	\caption{Node Spread Interdiction Algorithm (\NIS{})}
	\label{alg:nis}
	\KwIn{Graph $\G$, $\V_I$, $p(v), \forall v \in \V_I$, $k$, $C \subseteq V$ and $0 \leq \epsilon, \delta \leq 1$.}
	\KwOut{$\hat S_k$ - An $(1 - 1/e - \epsilon)$-near-optimal solution.}
	Compute $\Lambda$ (Eq.~\ref{eq:lambda}), $N_{\max}$ (Eq.~\ref{eq:nguard}); $t = 0$; \\
	Generate a stream of random samples $R_1, R_2, \dots$ where each $R_j$ is the set of nodes in \hsaw{} sample $h_j$ by Alg.~\ref{alg:par_saw};\\
	\Repeat{$|\R_t|\geq N_{\max}$}{
		$t = t+1$; $\R_t = \{ R_1, \dots, R_{\Lambda 2^{t-1}} \}; \R'_t = \{ R_{\Lambda 2^{t-1} + 1}, \dots, R_{\Lambda 2^{t}} \}$;\\
		$\hat S_k \leftarrow \textsf{GreedyNode}(\R_t, C, k)$;\\
		\If{$\emph{\textsf{CheckNode}}(\hat S_k, \R_t, \R'_t, \epsilon,\delta) = \emph{\textsf{True}}$}{\textbf{return} $\hat S_k$;}
	}
	\textbf{return} $\hat S_k$;\\
\end{algorithm}

\captionsetup{width=0.9\textwidth}
\begin{figure*}[ht]
	\subfloat[Pokec]{
		\includegraphics[width=0.24\linewidth]{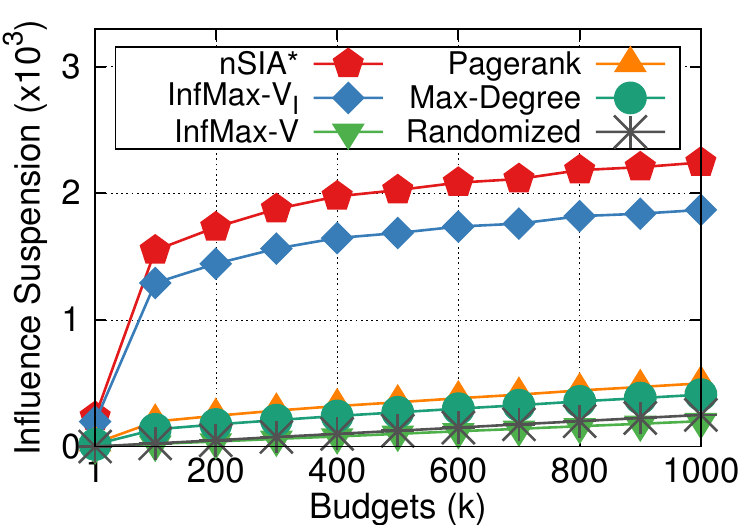}
	}
	\subfloat[Skitter]{
		\includegraphics[width=0.24\linewidth]{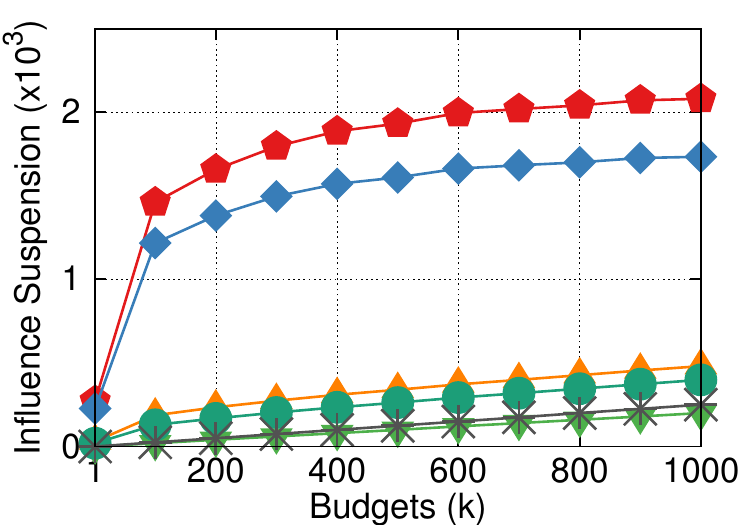}
	}
	\subfloat[LiveJournal]{
		\includegraphics[width=0.24\linewidth]{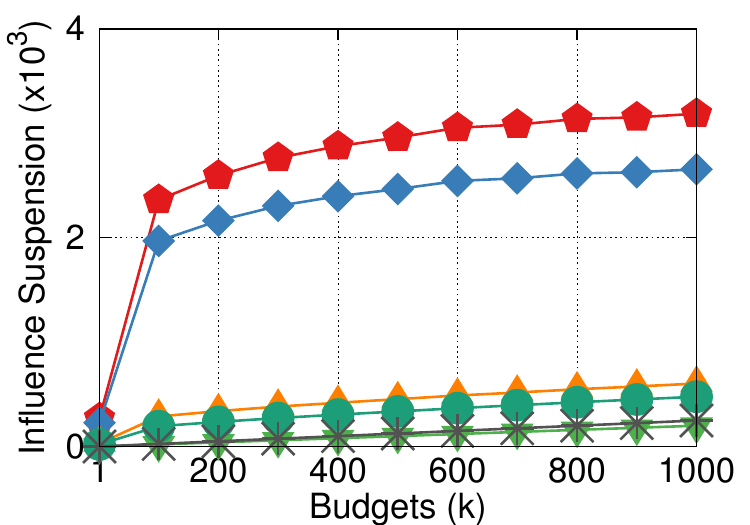}
	}
	\subfloat[Twitter]{
		\includegraphics[width=0.24\linewidth]{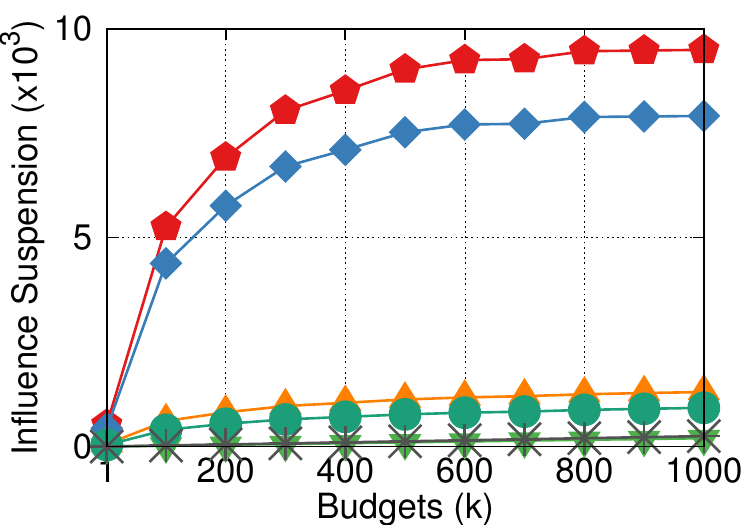}
	}
	\caption{Interdiction Efficiency of different approaches on \NRI{} problem (\NIS{}* refers to general \NIS{} algorithm)}
	\label{fig:node_inf}
\end{figure*}

The maximum number of \hsaw{} samples needed,
\begin{align}
	\label{eq:nguard}
	N_{\max} = (2-\frac{1}{e})^2(2+\frac{2}{3}\epsilon) n \cdot \frac{\ln (6/\delta)+\ln {n \choose k}}{k \epsilon^2},
\end{align}

%

\textbf{Algorithm Description.} Alg.~\ref{alg:eis} uses two subroutines:
\begin{itemize}
	\item[1)] \textsf{Greedy} in Alg.~\ref{alg:nmax-cover} that selects a candidate solution $\hat S_k$ from a set of \hsaw{} samples $\R_t$. This implements the greedy scheme that selects a node with maximum marginal gain and add it to the solution until $k$ nodes have been selected.
	\item[2)] \textsf{Check} in Alg.~\ref{alg:check} that checks the candidate solution $\hat S_k$ if it satisfies the given precision error $\epsilon$. It computes the error bound provided in the current iteration of \NIS{}, i.e. $\epsilon_t$ from $\epsilon_1, \epsilon_2, \epsilon_3$ (Lines~4-6), and compares that with the input $\epsilon$. This algorithm consists of a checking condition (Line~2) that examines the coverage of $\hat S_k$ on second independent set $\R'_t$ from $\R_t$ (used in \textsf{Greedy} to find $\hat S_k$) of \hsaw{} samples. The computations of $\epsilon_1, \epsilon_2, \epsilon_3$ are to guarantee the estimation quality of $\hat S_k$ and the optimal solution $S^*_k$.
\end{itemize}

The main algorithm in Alg.~\ref{alg:eis} first computes $\Lambda$ and the upper-bound on neccessary \hsaw{} samples in Line~1. Then, it enters a loop of at most $t_{\max}$ iterations. In each iteration, \NIS{} uses the set $\R_t$ of first $\Lambda 2^{t-1}$  \hsaw{} samples to find a candidate solution $\hat S_k$ by the \textsf{Greedy} algorithm (Alg.~\ref{alg:max-cover}). Afterwards, it checks the quality of $\hat S_k$ by the \textsf{Check} procedure (Alg.~\ref{alg:check}). If the \textsf{Check} returns \textsf{True} meaning that $\hat S_k$ meets the error requirement $\epsilon$ with high probability, $\hat S_k$ is returned as the final solution. 

In cases when \textsf{Check} algorithm fails to verify the candidate solution $\hat S_k$ after $t_{\max}$ iteations, \NIS{} will be terminated by the guarding condition $|\R_t| \geq N_{\max}$ (Line~9). 

\newpage
\section{Performance on Node-based Spread Interdiction}

The results of comparing the solution quality, i.e., influence suspension, of the algorithms on the four larger network datasets, e.g., Pokec, Skitter, LiveJournal and Twitter, are presented in Fig.~\ref{fig:node_inf} for \NRI{}. Across all four datasets, we observe that \NIS{} significantly outperforms the other methods with widening margins when $k$ increases. For example, on the largest Twitter network, \NIS{} is about 20\% better than the runner-up \textsf{InfMax-$V_I$} and 10 times better than the rest.

\begin{table*}[htb] \centering
	\begin{tabular}{l | rrrrrrrrrrr}
		\toprule
		\multirow{2}{*}{\textbf{Methods}} & \multicolumn{5}{c}{\textbf{Suspect Selection Ratio (SSR)}} && \multicolumn{5}{c}{\textbf{Interdiction Cost}} \\
		\cline{2-6}\cline{8-12}
		& 200 & 400 & 600 & 800 & 1000 && 200 & 400 & 600 & 800 & 1000 \\
		\midrule
		\textsf{\NIS{}} & 0.99 & 0.96 & 0.94 & 0.92 & 0.85  && \textbf{303.55} & \textbf{541.73} & \textbf{720.73} & 1070.50 & \textbf{1204.53} \\
		\textsf{InfMax}-$V_I$ & 1.00 & 1.00 & 1.00 & 1.00 & 1.00  && 333.90 & 600.60 & 825.75 & \textbf{1020.00} & 1246.03 \\
		\textsf{InfMax}-$V$ & 0.00 & 0.00 & 0.01 & 0.01 & 0.01  && 754.30 & 1302.85 & 1876.78 & 2427.58 & 2938.95 \\
		\textsf{Pagerank} & 0.00 & 0.00 & 0.00 & 0.00 & 0.00  && 727.29 & 1562.06 & 2551.15 & 3301.35 & 4229.84 \\
		\textsf{Max-Degree} & 0.00 & 0.00 & 0.00 & 0.00 & 0.00  && 820.03 & 1711.88 & 2660.15 & 3561.95 & 4505.40 \\
		\textsf{Randomized} & 0.00 & 0.01 & 0.02 & 0.02 & 0.04  && 486.85 & 962.78 & 1424.65 & 1885.00 & 2364.33 \\
		\bottomrule
	\end{tabular}
	\caption{Interruption levels from removing nodes in the networks}
	\label{tab:precision}
\end{table*}

\section{Analyzing Nodes Selected for Removal}
We aim at analyzing which kind of nodes, i.e., those in $\V_I$ or popular nodes, selected by different algorithms.

\textbf{Suspect Selection Ratio.} We first analyze the solutions using \textsf{Suspect Selection Ratio (SSR)} which is the ratio of the number of suspected nodes selected to the budget $k$. We run the algorithms on all five datasets and take the average. Our results are presented in Table~\ref{tab:precision}. We see that the majority of nodes selected by \NIS{} are suspected while the other methods except \textsf{InfMax-$V_I$} rarely select these nodes. \textsf{InfMax-$V_I$} restricts its selection in $V_I$ and thus, always has value of 1. The small faction of nodes selected by \NIS{} not in $V_I$ are possibly border nodes between $V_I$ and the outside.

\textbf{Interdiction Cost Analysis.} We measure the cost of removing a set $S$ of nodes by the following function.
\begin{align}
	Cost (S) = \sum_{v \in S} (1-p(v)) \log(d_{in}(v)+1)
\end{align}
The $Cost(S)$ function consists of two parts: 1) probability of node being suspected and 2) the popularity of that node in the network implied by its in-degree. Interdicting nodes with higher probability or less popular results in smaller cost indicating less interruption to the operations of network.

The Interdiction costs of the solutions returned by different algorithms are shown in Table~\ref{tab:precision}. We observe that \NIS{} introduces the least cost, even less than that of \textsf{InfMax}-$V_I$ possibly because \textsf{InfMax}-$V_I$ ignores the probabilities of nodes being suspected. The other methods present huge cost to networks since they target high influence nodes.
\end{document}